\documentclass[11pt]{article}
\usepackage[T1]{fontenc}
\usepackage[utf8]{inputenc}
\usepackage{lmodern}
\usepackage[margin=1in]{geometry}
\usepackage{amsmath,amssymb,amsthm,mathtools}
\usepackage{graphicx}
\usepackage{booktabs}
\usepackage{longtable}
\usepackage{caption}
\usepackage[hidelinks]{hyperref}
\usepackage{natbib}
\usepackage{setspace}
\usepackage{enumitem}
\usepackage[protrusion=true,expansion=false]{microtype}

\newtheorem{proposition}{Proposition}
\newtheorem{corollary}{Corollary}
\newtheorem{lemma}{Lemma}
\theoremstyle{definition}
\newtheorem{definition}{Definition}
\newtheorem{remark}{Remark}

\newcommand{\Lam}{\Lambda}
\newcommand{\sfl}{\underline{s}}
\newcommand{\sbar}{\bar{s}}

\newcommand{\1}{\mathbf{1}}

\providecommand{\anonymousbuild}{0}
\newif\ifanon
\ifnum\anonymousbuild=1 \anontrue \else \anonfalse \fi

\title{\textbf{The Institutional Window:\\ Occupation- and Jurisdiction-Specific Calibration of\\ Liability Signaling for Preserved Human Fallback Capability}}

\ifanon
  \author{}
\else
  \author{Andreas Bauer\thanks{Aegis Compliance and Strategies O\"U, Vienna. This paper builds on two companion working papers by the author, \citet{bauer2026a} and \citet{bauer2026b}; Section~\ref{sec:provenance} states the provenance of every model component. Simulation and calibration code, together with an interactive simulator, are provided in the Electronic Companion.}}
\fi

\date{August 2026}

\begin{document}
\maketitle

\begin{abstract}
\noindent\textbf{Problem definition.} When generative AI produces expert artifacts that clients cannot distinguish from those of a competent provider, the classical cost-based quality signal collapses and only outcome-contingent commitments can separate types. Prior work establishes that such a commitment certifies an \emph{endogenous and perishable} asset: the human fallback capability a firm builds by keeping staff engaged with cases the AI handles, eroding otherwise. That work is silent on where the mechanism holds. We ask where, across occupations and liability institutions, it remains informative---and what becomes of the capability stock where it does not.

\noindent\textbf{Methodology/results.} We introduce an \emph{institutional wedge} between the liability cap a firm posts and the retained exposure that carries information, generated by four legal primitives: the cost-allocation rule, the enforceability of penalty clauses, the displacement of private liability by state liability or pooled indemnity, and mandatory limits on contractual liability. The wedge compresses the separating type space into an \emph{institutional signaling window} $[\sfl,\sbar]$ bounded above by solvency and the penalty doctrine, and below where standard-terms control voids liability caps beneath a required threshold, truncating the enforceable message space. We calibrate five occupations and seven jurisdictions on published error-rate, deskilling and enforcement evidence. Three results follow. In the calibrated common-law agreed-damages channels the provability gross-up is unavailable whenever verifiability falls below $1/m$, converting a contracting problem into an operational one. Verifiability investment widens the window where the ceiling binds but \emph{narrows} it where the cap floor binds. In an agent-based market, within the tested policy class, every empty-window cell converges to zero engagement and fallback-skill collapse.

\noindent\textbf{Managerial implications.} Liability institutions are a workforce-capability instrument, not merely a risk-allocation device. Firms should target the binding margin in each jurisdiction; regulators should recognize that cap floors and pooled indemnity each suppress the signal sustaining fallback capacity.

\vspace{0.6em}
\noindent\textbf{Keywords:} service operations; human--AI collaboration; deskilling; liability; signaling; credence goods; agent-based simulation
\end{abstract}

\newpage
\section{Introduction}

A structural engineer signs off on a load calculation the AI produced. A radiologist reads a study the algorithm has already flagged. A senior developer approves a pull request written by a coding agent. In each case the artifact is excellent, and in each case the client---the building owner, the patient, the customer---cannot tell from the artifact whether the professional retains the capability to catch the algorithm when it is wrong.

This is the operational problem generative AI creates for expert services. It is not primarily a productivity problem; the productivity evidence is by now substantial and mostly positive \citep{brynjolfsson2025,noy2023,dellacqua2025,cui2026}. It is a problem of \emph{capability maintenance under invisibility}. The capability that matters---what \citet{singh2026} call the human fallback---is built by having people work through cases rather than approve them, it depreciates when they do not, and it is invisible in the output precisely because the AI has made the output uniformly good. \citet{budzyn2025} provide the sharpest evidence that the depreciation is real: in a multicentre observational study, adenoma detection in non-AI colonoscopy fell from 28.4\% to 22.4\% after endoscopists had been exposed to AI assistance, among physicians with more than two thousand procedures each.\footnote{The causal reading of that study has been contested in subsequent correspondence and the paper carries a corrigendum; we treat it as one calibrating datum among several, not as identification. See Section~\ref{sec:calib} and Appendix~C.}

Two companion papers establish the market-level consequence. \citet{bauer2026a} shows that as AI compresses the range over which buyers can discriminate artifacts, production-cost-based signals lose their informational content, while outcome-contingent commitments---warranties, indemnities, penalty clauses---retain theirs, because their expected cost is invariant to the cost of producing the artifact. \citet{bauer2026b} makes the certified type endogenous: because a provider with higher fallback skill fails less often when the AI fails, the expected cost of a liability pledge is strictly decreasing in that skill, which restores the Spence--Mirrlees condition on a variable the firm \emph{chooses}. The least-cost separating schedule is
\begin{equation}\label{eq:schedule}
\Lam^*(s) \;=\; c_0 \;+\; \frac{\chi}{\theta\pi}\,\ln\!\left[\frac{1-\rho(s_F)}{1-\rho(s)}\right],
\end{equation}
where $s$ is fallback skill, $\rho(\cdot)$ the rescue technology, $\pi$ the AI failure rate, $\theta$ ex post verifiability, $\chi$ the client's stake, and $c_0$ the fixed cost of enforcement.

Both papers stop at the same place. They establish that the mechanism \emph{exists}. They say nothing about where it \emph{works}. Their parameters are, in the authors' own words, identified in sign and order of magnitude but not estimated; their institutional content is a single fixed-cost term $c_0$ and a single solvency cap $\bar L$. Yet the entire mechanism runs through legal institutions that vary enormously---across occupations, because verifiability and stakes vary, and across jurisdictions, because the law of remedies varies. A liability pledge that separates types in a German software contract may be legally void in a British one, economically unenforceable in an American one, and simply irrelevant in German child protection, where the state assumes liability with discharging effect and the payer is not the beneficiary.

\subsection{What this paper does}

We take the mechanism as given and ask where it survives. The contribution has four parts.

\textbf{First, an institutional wedge.} We introduce a mapping $\Lam(K;J,v)$ from the cap a firm \emph{posts} to the retained exposure that \emph{carries information}, indexed by a jurisdiction vector $J$. Four legal primitives enter: the cost-allocation rule (which determines when a claim is worth bringing), the penalty doctrine (which determines how large a pledge may lawfully be), the displacement of private liability by state liability or pooled indemnity (which determines whether the provider bears the pledge at all), and mandatory limits on contractual liability (which determine how small a cap may lawfully be). Single crossing survives the wedge if and only if $\Lam$ is strictly increasing over a non-degenerate range (Proposition~\ref{prop:sc}).

\textbf{Second, a two-sided institutional compression.} The wedge confines separation to an \emph{institutional signaling window} $[\sfl,\sbar]$ (Proposition~\ref{prop:window}). The upper bound generalizes the solvency truncation of \citet{bauer2026b}: no one can post the collateral that certifying the very best types would require. The lower bound is new and, we think, the most consequential result in the paper. Where the law voids contractual limitations of liability below the foreseeable typical loss---as German standard-terms control does, without reduction to the permissible maximum---every provider is \emph{forced} to carry at least that much liability. Types whose separating pledge would lie below the mandatory floor cannot post less, and therefore pool. A cap floor born of consumer protection, designed to guarantee recovery, destroys the informational content of liability for the lower part of the type space. Globally it is not the dominant driver---displacement is, with a first-order Sobol' index of $0.31$ against $0.003$ for the floor---which is the point: the floor is a configuration-specific mechanism, and the paper's claims are stated at that level (Section~\ref{sec:robust}).

\textbf{Third, occupation $\times$ jurisdiction calibration.} We calibrate five occupations (software development, legal research, radiology, audit, social work) against published evidence on AI error rates, deskilling and enforcement costs, and seven jurisdictional regimes (Germany, Austria, Switzerland, the United Kingdom in both its private and NHS channels, and the United States in B2B and B2C configurations). The resulting diagnosis map (Figure~\ref{fig:heatmap}) classifies each cell into one of four regimes and identifies which institutional margin binds.

\textbf{Fourth, dynamics.} We embed the wedge in an agent-based market with mobile workers, a Builder that invests in engagement and a Free-Rider that poaches, and clients who form beliefs from a censored public record. Every cell with an empty window converges to zero engagement and complete fallback-skill collapse; the institutional regime, not the technology, determines the long-run human capability stock. Throughout, the market simulation is a computational illustration under a specified policy class, not an equilibrium construction (Section~\ref{sec:dynamics}); its policy parameters are probed directly.

\subsection{Three results worth stating up front}

\textbf{The penalty doctrine converts contracting into operations.} Separation requires exposure that compensates for imperfect verifiability: the retained exposure must satisfy $\Lam \geq v/\theta$, a gross-up of the loss by the inverse of the probability that fault can be proven \citep[Prop.~6]{bauer2026a}. Common-law jurisdictions restrict this channel. Under the penalty doctrine, agreed damages out of all proportion to any legitimate interest in enforcement are struck down; a multiple set purely to compensate for evidentiary risk is vulnerable as a penalty. Formally, separation requires $m \geq 1/\theta$, where $m$ is the enforceable multiple of loss. With $m=1$ in the United States and a central $m=3$ in the post-\emph{Cavendish} United Kingdom---the \emph{Houssein} litigation upheld a default rate at four times the standard rate as non-penal, [2025] EWHC 2749 (Ch), affirmed [2026] EWCA Civ 830; no point value in the defensible range $1.2$--$4$ is compelled, and the first version of this paper used the lower endpoint---our calibration finds the condition satisfied in four of ten occupation--jurisdiction pairs: the four market occupations in the United Kingdom, and none in the United States. At the lower endpoint of the English range the count falls to one (audit), so the condition's bite in England is itself a band statement; in the United States it binds at every $\theta<1$. The implication is not that common-law providers cannot signal, but that where the condition fails they must signal through a different channel: by \emph{raising $\theta$}---investing in logging, audit trails, acceptance criteria and forensic reconstruction---rather than by raising $L$. A doctrine of contract law thus determines whether a firm's optimal response is legal or operational. The claim is confined to the liquidated-damages channel the model prices; insurance, service credits and audit rights are alternative carriers within the admissible instrument class of Proposition~\ref{prop:sc}.

\textbf{Verifiability investment can destroy the signal.} It follows from the two-sided compression that the sign of $\partial(\sbar-\sfl)/\partial\theta$ depends on which margin binds. Raising $\theta$ lowers the required pledge at every type, which pushes the ceiling upward---good, where the ceiling binds---but simultaneously pushes more types \emph{below} the mandatory floor. Where the ceiling is slack and the floor is legally binding, the net effect is negative: in our software-development calibration, moving $\theta$ from $0.80$ to $0.95$ narrows the German window from $0.76$ to $0.74$ while widening the British window from $0.14$ to $0.17$ (Figure~\ref{fig:theta}). The same operational investment is a substitute for the pledge in one jurisdiction and a destroyer of it in another.

\textbf{The reliability paradox.} The engagement share needed to hold fallback skill constant is $h_{\min}=\gamma/(\varphi\pi+\gamma)$: as the AI becomes more reliable ($\pi\downarrow$), \emph{more} human engagement is required, because failures are the only learning events. Radiology, with $\pi=0.10$, requires $h_{\min}=0.92$; software development, with $\pi=0.30$, requires $0.75$. Improving AI reliability does not relax the capability-maintenance constraint; it tightens it. This is the operational counterpart to the observation in \citet{bauer2026b} that better unassisted AI narrows the exclusion band from below---the effects compound.

\subsection{\ifanon Relation to prior work\else Relation to the companion papers\fi}

\ifanon We are explicit about what is new.\else We are explicit about what is new; the two papers below are companion work by the same author.\fi\ \citet{bauer2026a} establishes the collapse of production-side signals and the survival of the liability class; \citet{bauer2026b} makes the type endogenous, derives \eqref{eq:schedule}, the two-audience targeting result and the free-riding threshold. Neither result is re-derived here; both are used. Everything in Sections~\ref{sec:wedge}--\ref{sec:dynamics}---the wedge, the two-sided window, the penalty-doctrine condition, the third-payer threshold, the $\theta$-investment reversal, the reliability paradox, and the entire calibration---is new. Table~\ref{tab:provenance} (Section~\ref{sec:provenance}) states the provenance of every component.

\section{Related literature}\label{sec:lit}

\paragraph{Human--AI division of labor in service operations.} Our starting point is the emerging operations literature on how firms should allocate work between algorithms and people. \citet{cohen2026} set out the community's agenda for co-adaptation and shared decision authority; \citet{daitayur2022} identify the buy-in conditions for AI-augmented delivery; \citet{daising2025} model the physician's decision to invoke AI under alternative liability regimes---the closest antecedent to our liability channel, though with an exogenous competence type. \citet{boyaci2024} show that machine input shifts the human error distribution rather than uniformly improving it, and \citet{devericourt2023} show that a supervisor may never learn whether the machine is better, which is precisely the verification friction our $\theta$ parameter measures. We differ from all of these in making the human capability a \emph{market-observable} object that must be certified to a third party.

\paragraph{Learning, forgetting and skill as a state variable.} That capability depreciates is old news in operations: \citet{argote1990} and \citet{benkard2000} establish organizational forgetting; \citet{gans2002} and \citet{arlotto2014} embed learning and turnover in staffing policy; \citet{kc2013} and \citet{staats2012} document learning from experience in service settings. In all of this work, erosion is a by-product of interruption or turnover. The novelty of the fallback problem---and the reason \citet{singh2026} is the direct antecedent of our dynamics---is that erosion becomes a \emph{choice variable}: the firm sets the engagement share $h$, and with it the drift of the skill state. We take that dynamic as given and ask what makes the firm willing to choose $h>0$.

\paragraph{Warranties, performance contracting and credence services.} Liability appears in operations mainly as an incentive or risk-allocation instrument: \citet{kim2007} model performance-based contracting as a multitask agency problem. In economics it appears as a screening device over an exogenous type \citep{spence1977}. Our object is neither, and its nearest neighbours require explicit demarcation. The warranty-signaling canon---\citet{spence1977}, \citet{galor1989}, \citet{lutz1989}, and the money-back guarantee of \citet{moorthy1995}---treats coverage as an \emph{addition} above a finite default; \citet{lutzpadmanabhan1995} ask why observed warranties are minimal, and \citet{spier1992} reads contract terms generally as signals. Our instrument is the dual: a \emph{cap subtracted from an unlimited default}, whose admissible range is set by mandatory law rather than by demand. \citet{dks2011} establish experimentally that liability, not verifiability, disciplines credence markets---the moral-hazard complement to the adverse-selection channel here. The pledge, in short, is a signal about an endogenously produced operational asset, and the question is whether the legal system lets that signal through. \citet{debo2008} and \citet{veeraraghavan2011} bring credence-good and quality-inference problems into operations; \citet{dulleck2006} survey the economics. Where that literature asks whether the expert over-treats, we ask whether the expert has retained the capacity to treat at all.

\paragraph{Professional service operations.} The occupations we calibrate are customer-intensive expert services in the sense of \citet{anand2011}, organized around the gatekeeper--referral structure formalized by \citet{shumsky2003} and extended by \citet{dada2025}. \citet{tan2014} document the workload--productivity relation that our engagement drag captures in reduced form. Our contribution to this stream is to add a market layer in which the firm's staffing and engagement policy must be \emph{communicated} to clients under conditions where the output no longer communicates it.

\paragraph{Institutions.} There is a substantial law-and-economics literature on cost-allocation rules and on sovereign immunity, and a health-economics literature on third-party payment, but to our knowledge none of it has been connected to competence signaling under AI. Our treatment of the penalty doctrine follows the standard contrast between civil-law enforceability of contractual penalties and the common-law rule against penalties as restated in \emph{Cavendish Square Holding BV v Talal El Makdessi} [2015] UKSC 67. On mandatory standards displacing voluntary separation, the antecedents are \citet{leland1979} and \citet{ronnen1991}; on mandatory versus voluntary disclosure, \citet{fishmanhagerty2003}.

\section{Model}\label{sec:model}

\subsection{Primitives}

A unit mass of providers serves clients in a credence-good market. Each provider employs workers whose \emph{fallback skill} $s\in[0,\alpha]$ measures the probability-weighted capacity to recover a case when the AI fails; $\alpha\in(0,1)$ is the AI capability frontier. Time is discrete. In each period, a client arrives with a problem of value $v$; the AI attempts it and fails with probability $\pi$. Conditional on failure, the human recovers the case with probability
\begin{equation}\label{eq:rho}
\rho(s)\;=\;\rho_{\max}\left(\frac{s}{\alpha}\right)^{a},\qquad a\in(0,1),
\end{equation}
so that $\rho'>0$ and $\rho''<0$. Concavity is what drives the targeting results of \citet{bauer2026b} and, as we show, the shape of the window.

The firm chooses an \emph{engagement share} $h\in[0,1]$: the fraction of cases in which a human works the problem rather than approving the machine's output. Skill evolves as in \citet{singh2026},
\begin{equation}\label{eq:skill}
s_{t+1}\;=\;s_t+(\alpha-s_t)(1-\pi)\bigl[\varphi\pi h_t-\gamma(1-h_t)\bigr],
\end{equation}
with learning rate $\varphi$ and erosion rate $\gamma$. Engagement is costly: it imposes an output drag $\delta h(\alpha-s)$.

Clients cannot observe $s$. They observe the posted liability cap $K\geq0$; what they price is the retained exposure per verified failure that the cap, the penalty doctrine and the enforcement regime jointly leave with the provider. Ex post verifiability---the probability that a failure can be attributed and proven---is $\theta$. The client's stake is $\chi=\omega\,\zeta\,v\,\pi$, where $\zeta$ is the provider's share of surplus and $\omega\in[0,1]$ is a \emph{payer-alignment weight} introduced below.

\subsection{The institutional wedge}\label{sec:wedge}

A jurisdiction is a vector $J=(\text{rule},c_0,c_1,a_c,m,\psi,\underline{L},\bar L_{\text{law}},\kappa)$. The central construct of this paper is the map from the posted cap to the retained exposure.

\paragraph{The instrument.} The posted object is not a voluntary add-on to a finite default; it is a \emph{limitation-of-liability cap} $K$ set against an unlimited fault-based default. Under German law the default is full statutory damages (\S\S~249~ff., 280 BGB); a contractual cap subtracts from it, and standard-terms control determines which subtractions are enforceable. Because a cap voided by \S~307 BGB is not reduced to the permissible level but falls away entirely---no \emph{geltungserhaltende Reduktion} (BGHZ 84, 109), the gap being filled by dispositive law (\S~306(2) BGB)---an attempted cap below the lawful minimum returns the provider to unlimited exposure, which is dominated for every type. The enforceable choice set is therefore $\{K\geq f v\}\cup\{\varnothing\}$: a truncation of the message space that is \emph{derived} from the void-snap, not assumed about buyer inference. Common law runs the same instrument from the other side. In the United States, commercial limitation-of-remedy clauses may be enforceable under heterogeneous state-law doctrines; UCC \S~2-719 is illustrative for goods transactions only. In England, negotiated B2B caps may be upheld where the UCTA reasonableness test is met (\emph{Watford Electronics v Sanderson} [2001] EWCA Civ 317). The common-law benchmark therefore has no mandatory lower cap floor; its binding restriction in the model concerns agreed-damages add-ons, through which exposure \emph{above} compensatory loss---the provability gross-up of Proposition~\ref{prop:penalty}---must travel. The civil-law cap and the common-law add-on are two margins of the same exposure variable, restricted from below and from above respectively. We accordingly work directly in the space of retained exposure per verified failure, $\Lam$, with $K$ the posted cap; Lemma~\ref{lem:impl} shows this is without loss. What the companion papers call the liability pledge is, in institutional dress, exactly this retained exposure; the contractual carrier is the cap. This construction also answers an identification concern one might raise against treating a statutory default and a voluntary pledge as one signal: under an unlimited default there is no voluntary increment to post---the only margin of choice is how little to limit, and the signal is the cap.

\begin{definition}[Institutional wedge]\label{def:wedge}
For a posted cap $K$, value at stake $v$ and jurisdiction $J$, the \emph{retained (signal-bearing) exposure} is
\begin{equation}\label{eq:wedge}
\Lam(K;J,v)\;=\;\underbrace{(1-\psi)}_{\text{displacement}}\cdot\underbrace{\min\{K,\;m\,\kappa v\}}_{\text{penalty doctrine}}\cdot\underbrace{\1\{\text{enforcement viable}\}}_{\text{cost rule}} .
\end{equation}
\end{definition}

The three factors correspond to three distinct legal mechanisms.

\emph{(i) Displacement} $\psi\in[0,1]$. Where liability is transferred to the state with discharging effect---German official liability under Art.~34 GG in conjunction with \S~839 BGB, the Austrian \emph{Amtshaftungsgesetz}, the Swiss \emph{Verantwortlichkeitsgesetz}---the individual provider cannot bind itself, and the fraction $\psi$ of any pledge is economically borne by someone else. Recourse against the official is confined to intent and gross negligence, so $\psi$ is close to but strictly below one in Germany and Austria, and equal to one in Switzerland, where direct action against the official is excluded outright. Crucially, $\psi$ also captures a mechanism that is legally unrelated but economically identical: \emph{pooled indemnity}. Under NHS Resolution's clinical negligence scheme, liability is state-pooled with no provider-specific premium differentiation, which in the language of \citet[Prop.~7]{bauer2026a} is a pooled-premium regime with a zero deductible. Immunity and pooling are observationally distinct legal institutions with identical signaling consequences---a prediction only the signaling perspective generates.

\emph{(ii) The penalty doctrine} $m$. Civil law enforces contractual penalties; between merchants an individually agreed penalty escapes judicial reduction (\S~348 HGB, barring \S~343 BGB), while standard-terms penalties face \S~307 review with case-group ceilings---$m=\infty$ is therefore the stylized benchmark for the negotiated channel. Common law strikes agreed damages out of all proportion to any legitimate interest in enforcement (\emph{Cavendish} [2015] UKSC 67 at [32]); the test tolerates supra-compensatory exposure, and the verified post-2015 span is informative: in \emph{Houssein v London Credit Ltd} the Court of Appeal corrected the test ([2024] EWCA Civ 721) and, on remission, a default rate at four times the standard rate was held non-penal ([2025] EWHC 2749 (Ch)), a result affirmed on a second appeal ([2026] EWCA Civ 830). We accordingly centre the English multiple at $m=3$ with a defensible range of $1.2$--$4$, against $m=1$ for the United States, where Restatement (Second) \S~356 treats unreasonably large agreed damages as unenforceable and evaluates reasonableness in light of the anticipated or actual loss and the difficulties of proof.

\paragraph{Correction in this version.} The first version of this paper set $m=1.2$ for the United Kingdom, reading \emph{Cavendish} as tolerating only modest supra-compensatory exposure. The \emph{Houssein} chain shows the tolerance to be substantially wider, and no point value in $[1.2,4]$ is defensible as a measurement; we therefore centre at $3$ and report the lower endpoint---the previous calibration---as band sensitivity wherever the classification depends on it. The change moves only the British column: at the central multiple the United Kingdom's binding upper margin switches from the penalty cap to solvency in the market occupations except software development, its windows widen (most sharply in software development, from $0.17$ to the full type space over the band), and the penalty-doctrine condition of Proposition~\ref{prop:penalty} is met in the four British market occupations rather than in audit alone. All British point values in the tables below are stated at the central multiple; Section~\ref{sec:robust} reports the band.

\emph{(iii) The cost rule.} Enforcement is viable only if the party who must fund the claim expects to recover more than it spends. Let $\rho_p$ denote the probability of prevailing. Under a loser-pays rule with a degressive statutory fee schedule---the German GKG/RVG structure---the claimant's expected cost is $(1-\rho_p)\bigl(c_0+c_1\Lam^{a_c}\bigr)$, and enforcement is viable iff $\rho_p\Lam>(1-\rho_p)(c_0+c_1\Lam^{a_c})$. Under the English rule with a small-claims track, costs are not recoverable below a threshold, which produces a step rather than a smooth boundary. Under the American rule each side bears its own costs, and the claim is brought if either the client's own-cost calculation or a contingency-funded attorney's participation constraint is satisfied. The parameter $\kappa$ captures collective redress: class aggregation multiplies the effective stake, which is why the American rule excludes small claims severely in B2B---where arbitration clauses suppress aggregation---and much less severely in B2C.

Given \eqref{eq:wedge}, the expected liability cost of a provider of type $s$ is
\begin{equation}\label{eq:cost}
C(s,K)\;=\;\theta\,\pi\,\bigl(1-\rho(s)\bigr)\,\Lam(K;J,v).
\end{equation}

Table~\ref{tab:oper} operationalizes the four primitives soberly: the legal source, the model object it maps to, the calibrated value, and the observable that would identify it empirically. Nothing in the table is estimated; the mapping is the claim.

\begin{table}[t]\centering\footnotesize
\caption{Operationalization of the legal primitives}
\label{tab:oper}
\begin{tabular}{p{2.2cm}p{2.4cm}p{4.3cm}p{2.1cm}p{3.2cm}}
\toprule
Primitive & Model object & Legal source (examples) & Calibrated value & Identifying observable\\
\midrule
Cost allocation & viability indicator; $c_0,c_1,a_c$, small-claims step & \S~91 ZPO with GKG/RVG tables (DE); CPR Small Claims Track (UK); American rule with contingency funding (US) & rule type and $(c_0,c_1,a_c)$, Table~\ref{tab:jur} & statutory fee schedules; filing rates by claim size\\
Penalty doctrine & cap $m\kappa v$ on $\Lam$ & \S\S~339~ff.\ BGB; \S~348 HGB, negotiated channel (DE, $m=\infty$ stylized); \emph{Cavendish} [2015] UKSC 67 [32]; \emph{Houssein} [2024] EWCA Civ 721, [2025] EWHC 2749 (Ch), affirmed [2026] EWCA Civ 830 (UK, central 3, range 1.2--4); Restatement (2d) \S~356, UCC \S~2-718 (US) & $m\in\{\infty,3,1.0\}$ & liquidated-damages multiples in contract corpora\\
Displacement / pooling & factor $(1-\psi)$ & Art.~34 GG, \S~839 BGB; AHG (AT); VG art.~3(3) (CH); Crown Proceedings Act 1947; NHS Resolution CNST & $\psi$, Table~\ref{tab:jur} & recourse incidence; premium differentiation in indemnity schemes\\
Mandatory minimum & lowest enforceable cap $f\,v$ & \S\S~305~ff., 307 BGB, no \emph{geltungserhaltende Reduktion} (DE) & $f\in\{0.60,0.45,$ $0.20,0\}$ & case law voiding caps; surveyed cap levels\\
Collective redress & stake multiplier $\kappa$ & FRCP Rule 23 (US); Directive (EU) 2020/1828 (B2C only) & $\kappa\in\{1,8\}$ & class-action incidence by claim size\\
\bottomrule
\end{tabular}
\end{table}

\subsection{Timing and equilibrium}

Within a period: (1) the firm chooses $h$, which determines next period's skill through \eqref{eq:skill}; (2) the firm posts the cap $K$; (3) clients observe $K$ and a public record of verified rescues and failures, form beliefs $\mu(s\mid L,\text{record})$, and choose a provider; (4) the AI fails with probability $\pi$, the human rescues with probability $\rho(s)$, failures are verified with probability $\theta$, and verified failures trigger payment of $\Lam$. The separating pricing environment, posterior formation and off-path beliefs of \citet{bauer2026b} are maintained primitives throughout: this paper does not re-solve the buyer's problem, and its contribution is to characterize which portions of that separating allocation remain legally implementable under the wedge, in the least-cost sense of \citet{riley1979}.

\subsection{Provenance and relation to the companion papers}\label{sec:provenance}

\ifanon Because this paper builds directly on two recent working papers, we state component provenance in the main text rather than in an appendix.\else Because this paper stands on two companion papers by the same author, we state component provenance in the main text rather than in an appendix.\fi Table~\ref{tab:provenance} lists every model component, its source, and its status here. No result of \citet{bauer2026a} or \citet{bauer2026b} is re-derived or re-claimed; the contribution of this paper is confined to the rows marked new.

\begin{table}[h]\centering\small
\caption{Provenance of model components}
\label{tab:provenance}
\begin{tabular}{p{5.6cm}p{3.6cm}p{3.6cm}}
\toprule
Component & Source & Status in this paper\\
\midrule
Skill transition \eqref{eq:skill}, learning/erosion rates & \citet{singh2026} & Used unchanged\\
Rescue technology \eqref{eq:rho}, schedule \eqref{eq:schedule}, solvency truncation & \citet{bauer2026b} & Used unchanged\\
Outcome-contingent signal class, $\theta\Lam\geq v$, retained-risk principle & \citet{bauer2026a} & Used unchanged\\
Institutional wedge \eqref{eq:wedge} & --- & New\\
Two-sided window, mandatory-minimum floor (Prop.~\ref{prop:window}, Cor.~\ref{cor:floor}) & --- & New\\
Penalty-doctrine condition (Prop.~\ref{prop:penalty}) & --- & New\\
Payer-alignment threshold (Prop.~\ref{prop:payer}) & --- & New\\
Verifiability-investment direction (Prop.~\ref{prop:theta}) & --- & New\\
Reliability paradox (Prop.~\ref{prop:paradox}) & --- & New\\
Occupation $\times$ jurisdiction calibration; provider-scale result; diagnosis map & --- & New\\
\bottomrule
\end{tabular}
\end{table}

\section{The institutional signaling window}\label{sec:window}

\subsection{Single crossing under the wedge}

The results below are stated in the space of enforceable exposure. Because the wedge \eqref{eq:wedge} is flat at its penalty kink and jumps at the enforcement boundary, it is not invertible; the following lemma guarantees that working in exposure space is nevertheless without loss.

\begin{lemma}[Implementation correspondence]\label{lem:impl}
Let $\mathcal K(\Lam;J,v)=\{K\geq0:\ \Lam(K;J,v)=\Lam\}$. Then $\mathcal K(\Lam)$ is non-empty if and only if $\Lam\in\{0\}\cup[\underline{\Lam}_{\text{enf}},\overline{\Lam}]$, and on that set it is an interval with a least element---the \emph{minimal implementing cap}---equal to $\Lam/(1-\psi)$ where the penalty cap is slack and to $m\kappa v$ where it binds. Every schedule below is implemented by the minimal cap.
\end{lemma}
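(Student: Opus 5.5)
The plan is to compute the preimage of the wedge \eqref{eq:wedge} directly, treating its three factors one at a time. Write $g(K)=(1-\psi)\min\{K,m\kappa v\}$ for the product of the displacement and penalty factors. The full wedge is $g(K)$ times the viability indicator, and I take that indicator to depend on $K$ only through the candidate exposure $g(K)$, as in the cost-rule condition of Section~\ref{sec:wedge}. I would set $\overline{\Lam}=(1-\psi)m\kappa v$, with $\overline{\Lam}=\infty$ when $m=\infty$. For $\underline{\Lam}_{\text{enf}}$ I take the larger of two quantities: the infimum of the viable exposures, and the floor-induced exposure $(1-\psi)fv$. The floor enters because the enforceable choice set is $\{K\geq fv\}\cup\{\varnothing\}$ (the void-snap). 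The degenerate case $\psi=1$ is dispatched first: there $g\equiv0$, so only $\Lam=0$ is attainable, and the statement holds with an empty interval part.

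\textbf{Step 1: the viability set is an up-set.} For the German loser-pays rule, set $V(\Lam)=\rho_p\Lam-(1-\rho_p)(c_0+c_1\Lam^{a_c})$. Because $a_c\in(0,1)$, the term $-c_1\Lam^{a_c}$ is convex, so $V$ is convex. Also $V(0)=-(1-\rho_p)c_0<0$ and $V(\Lam)\to\infty$. A convex function that starts negative and ends positive crosses zero exactly once, from below, so the set of viable exposures is $(\Lam_0,\infty)$. I would take its closure, or adopt the weak-inequality convention, to obtain a least element. Under the English small-claims step, viability is a threshold in $\Lam$ by construction. Under the American rule, each of the two disjunctive conditions (the client's own-cost test, and the contingency attorney's participation constraint with stake $\kappa\Lam$) is increasing in $\Lam$, so their union is again an up-set. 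In every regime, then, viability is $\Lam\geq\Lam_{\text{enf}}$ for some threshold $\Lam_{\text{enf}}$.

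\textbf{Step 2: attainable values and interval structure.} On the enforceable domain $K\geq fv$, the map $g$ is continuous and nondecreasing. It rises linearly with slope $1-\psi$ up to $K=m\kappa v$ and is flat afterwards, so its range is $[(1-\psi)fv,\overline{\Lam}]$. Combining this with Step~1, a positive $\Lam$ is attained exactly when $\Lam\in[\underline{\Lam}_{\text{enf}},\overline{\Lam}]$. The value $\Lam=0$ is always attained, by a non-viable or absent cap. For attained $\Lam>0$, $\mathcal K(\Lam)=g^{-1}(\Lam)$, and the preimage of a point under a monotone continuous map is an interval. If $\Lam<\overline{\Lam}$, that preimage is the single point $\Lam/(1-\psi)$. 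If $\Lam=\overline{\Lam}$, it is $[m\kappa v,\infty)$, whose least element is $m\kappa v$. For $\Lam=0$, the preimage is $\{0\}$ together with the non-viable caps below the enforcement threshold, and these form an interval with least element $0$.

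\textbf{Step 3: implementation by the minimal cap.} The cost \eqref{eq:cost} depends on $K$ only through $\Lam$. By the maintained pricing environment of \citet{bauer2026b}, buyers price retained exposure. Any two caps in the same $\mathcal K(\Lam)$ are therefore payoff-equivalent for every type and every buyer. Replacing a schedule's cap by the least element of $\mathcal K(\Lam)$ thus changes nothing, and by construction that least element respects the floor $fv$.

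\textbf{Main obstacle.} I expect the hardest step to be Step~1 for the American rule, whose disjunctive, $\kappa$-scaled viability condition is only loosely specified here. My first approach would be to show each disjunct is monotone in $\Lam$, which suffices. I also need to fix the closure convention at the viability boundary, so that $\underline{\Lam}_{\text{enf}}$ is a genuine least element rather than merely an infimum.
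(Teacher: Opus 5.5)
Your skeleton is the paper's own proof in Appendix~A.6. You compute each fiber of the wedge directly: the singleton $\{\Lam/(1-\psi)\}$ on the slack segment, the ray $[m\kappa v,\infty)$ at the capped value, empty fibers for $\Lam\in(0,\underline{\Lam}_{\text{enf}})$, and zero implemented by non-viable caps. You then select the minimal cap by cost-equivalence. Your Step~1 improves on the paper, which simply asserts that viability is a threshold in exposure. Your convexity argument proves this for the degressive loser-pays rule, and you are right that the strict viability inequality needs a closure convention.

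The step that fails is building the mandatory floor into the lemma. The statement defines $\mathcal K(\Lam)$ over all $K\geq0$ using the wedge of Definition~\ref{def:wedge}, which contains no floor, so $\underline{\Lam}_{\text{enf}}$ is the enforcement threshold alone. The floor enters only later, through $\underline{\Lam}$ in Proposition~\ref{prop:window} and Corollary~\ref{cor:floor}.

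On your restricted domain $\{K\geq fv\}\cup\{\varnothing\}$, the case $\Lam=0$ breaks down in three ways:
\begin{itemize}
\item For $f>0$ the cap $K=0$ is not available.
\item $\{0\}$ together with the non-viable caps in $[fv,\infty)$ is not an interval.
\item $\varnothing$ does not deliver zero exposure. By the very void-snap you invoke, it restores the unlimited default, which the wedge maps to $(1-\psi)m\kappa v$ (unbounded if $m=\infty$).
\end{itemize}
Whenever the floor binds, i.e.\ $(1-\psi)fv$ is at or above the viability threshold, as in the German cells, every enforceable cap is viable. Then $\Lam=0$ is not attainable at all, so the $\{0\}\cup$ part of the statement would be false under your reading.

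Your Step~2 range claim also fails when $fv>m\kappa v$ (the case $m\kappa v\leq\underline L$ of Proposition~\ref{prop:sc}). There $g$ is constant at $\overline{\Lam}$ on your domain, so $\overline{\Lam}$ is attained, by caps whose least element is $fv$ rather than $m\kappa v$. Yet $\overline{\Lam}$ lies outside your interval $[\underline{\Lam}_{\text{enf}},\overline{\Lam}]$, which is empty in that case.

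The repair is to drop the floor. With $K\geq0$ and viability an up-set in $K$, $\mathcal K(0)$ is an initial segment of $[0,\infty)$ containing $K=0$, and the rest of Step~2 goes through as written.

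Finally, Step~3 only shows that caps within one fiber are payoff-equivalent. To conclude that every schedule is implemented by the minimal cap, you also need that the exposures the schedules prescribe lie in the attainable set, including the pooling values at the floor and ceiling. The paper takes this from Proposition~\ref{prop:window}.
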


\begin{proposition}[Survival of single crossing]\label{prop:sc}
Fix $(J,v)$. On any interval of posted caps over which $\Lam(\cdot;J,v)$ is strictly increasing, the cost function \eqref{eq:cost} satisfies
\[
\frac{\partial C}{\partial \Lam}=\theta\pi\bigl(1-\rho(s)\bigr)>0,
\qquad
\frac{\partial^2 C}{\partial s\,\partial \Lam}=-\theta\pi\rho'(s)<0 ,
\]
so the Spence--Mirrlees condition holds on that interval; conditional on the pricing environment inherited from \citet{bauer2026b}---the client stage, posterior formation and the Riley construction---the least-cost separating equilibrium of that environment extends to it. Conversely, if $\Lam(\cdot;J,v)$ is constant on the feasible range of posted caps---which occurs whenever $\psi=1$, or $m\kappa v\leq\underline{L}$, or enforcement is non-viable at every feasible $\Lam$---then $C$ does not vary with $K$ and no separating equilibrium in the cap exists.
\end{proposition}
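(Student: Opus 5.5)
The plan is to work entirely in exposure space, using Lemma~\ref{lem:impl} to pass back and forth between the posted cap $K$ and the retained exposure $\Lam$, and to treat the forward and converse directions separately.

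\emph{Forward direction.} On an interval $I$ of posted caps where $\Lam(\cdot;J,v)$ is strictly increasing, Definition~\ref{def:wedge} fixes the regime on $I$. The enforcement indicator equals one, since a strictly increasing function cannot sit on the zero branch. The penalty cap is slack, so $\Lam=(1-\psi)K$ with $\psi<1$. The map $K\mapsto\Lam$ is therefore an affine bijection of $I$ onto an interval of exposures. I would then differentiate \eqref{eq:cost} directly in $\Lam$. This gives $\partial C/\partial\Lam=\theta\pi(1-\rho(s))$, which is positive because $\theta,\pi>0$ and $\rho(s)\le\rho_{\max}<1$; the bound $\rho_{\max}<1$ is implicit in \eqref{eq:rho} and should be stated. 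It also gives the cross-partial $-\theta\pi\rho'(s)$, which is negative because $\rho'>0$ on $(0,\alpha]$. Chain-ruling through the positive constant $1-\psi$ transfers both signs to the $K$-parametrization, so the Spence--Mirrlees condition holds in the cap as well.

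\emph{Extending the equilibrium.} Next I would argue that the least-cost separating equilibrium of \citet{bauer2026b} extends to $I$. The client stage, the posteriors and the off-path beliefs are inherited unchanged. The Riley construction uses only single crossing and monotone cost, so the differential-equation solution \eqref{eq:schedule} remains the least-cost schedule wherever its values lie in $\Lam(I)$. Lemma~\ref{lem:impl} then says it is implemented by the minimal cap $\Lam^*(s)/(1-\psi)$. This is bookkeeping more than argument.

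\emph{Converse direction.} First I would establish the three sufficient conditions for constancy. If $\psi=1$, then $\Lam\equiv0$. If $m\kappa v\le\underline L$, every feasible cap satisfies $K\ge\underline L\ge m\kappa v$, so the min in \eqref{eq:wedge} always returns $m\kappa v$. If enforcement is non-viable everywhere, the indicator kills $\Lam$. In each case $C(s,K)$ is independent of $K$ for every type.

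\emph{Main obstacle.} The hardest part is ruling out separation in this constant case. Because costs do not vary, any two types face identical payoffs from every cap. In a separating profile, the type assigned the less favourable price would profitably mimic the other type's cap at zero cost difference. This is the standard cheap-talk unraveling and it holds for any beliefs. The care needed is that the "feasible range" in the second case is $K\ge\underline L$. I would read $\underline L$ as the mandatory floor $fv$ of Table~\ref{tab:oper}, which makes the constancy exact. I also need $\rho'>0$ to hold at $s=0$ or near it, where $\rho'$ blows up; the inequality stays strict there, so this is harmless.
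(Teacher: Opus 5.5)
Your proposal is correct and follows essentially the same route as the paper's proof in Appendix~A.1. You differentiate $C=\theta\pi(1-\rho(s))\Lambda$ in exposure space and chain-rule through the strictly increasing wedge, where you make the slope explicit as $1-\psi$ and the paper writes $\partial\Lambda/\partial K$; you inherit the Riley construction with Lemma~\ref{lem:impl} handling implementation; and you obtain the converse from the fact that a cap-invariant cost leaves no separating profile incentive compatible. You also check directly that each listed condition ($\psi=1$; $m\kappa v\le\underline{L}$ read with $\underline{L}=fv$; universal non-viability) makes $\Lambda$ constant, which the paper's proof leaves implicit.
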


\begin{proof}
See Appendix~A.1. The first two displays are immediate from \eqref{eq:cost} and $\rho'>0$; the converse is the observation that a signal whose cost is type-independent is uninformative in every PBE, which is the argument of \citet[Prop.~5]{bauer2026a} applied to the wedge rather than to provenance certification.
\end{proof}

Proposition~\ref{prop:sc} identifies three qualitatively distinct institutional failure modes, and it is worth being clear that they are not substitutes for one another. Displacement ($\psi\to1$) removes the provider from the liability relation entirely. The penalty cap removes the \emph{range} over which the retained exposure can vary. Non-viable enforcement removes the client's incentive to invoke it. A reform that addresses one leaves the other two intact.

\subsection{Two-sided compression}

The substantive content of the wedge is that it bounds the retained exposure from both sides. Write $\Lam^*(s)$ for the required retained exposure from \eqref{eq:schedule}, which is strictly increasing and convex in $s$ with $\Lam^*(s_F)=c_0$ and $\Lam^*(s)\to\infty$ as $s\to\alpha$.

\begin{proposition}[The institutional signaling window]\label{prop:window}
Define the effective ceiling and floor on the retained exposure,
\begin{align}
\overline{\Lam}(J,v) &= (1-\psi)\cdot\min\bigl\{\bar L_{\text{solv}},\;\bar L_{\text{law}},\;m\kappa v\bigr\}, \label{eq:ceiling}\\
\underline{\Lam}(J,v) &= (1-\psi)\cdot\max\bigl\{\underline{L}(v),\;\Lam_{\text{enf}}(J,v)\bigr\}, \label{eq:floor}
\end{align}
where $\underline{L}(v)$ is the minimum retained exposure implied by the cap floor and $\Lam_{\text{enf}}$ the smallest exposure worth enforcing. Then separation obtains exactly on
\[
[\sfl,\sbar]\;=\;\bigl\{s:\ \underline{\Lam}\leq \Lam^*(s)\leq\overline{\Lam}\bigr\},
\qquad
\sfl=(\Lam^*)^{-1}(\underline{\Lam}),\quad \sbar=(\Lam^*)^{-1}(\overline{\Lam}),
\]
which is non-empty iff $\underline{\Lam}\leq\overline{\Lam}$. Types below $\sfl$ pool at the floor; types above $\sbar$ pool at the ceiling.
\end{proposition}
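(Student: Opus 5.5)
The plan is to work entirely in exposure space, using Lemma~\ref{lem:impl} to pass back to posted caps at the end, and to split the argument into three pieces: characterizing the enforceable exposure set, applying Proposition~\ref{prop:sc} on it, and inverting $\Lam^*$. First, using Definition~\ref{def:wedge}, I would show that the set of retained exposures attainable by some enforceable cap is exactly $\{0\}\cup[\underline{\Lam},\overline{\Lam}]$, together with the void-snap option $\varnothing$, which is dominated.

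\begin{itemize}
\item \emph{Upper end.} The penalty factor truncates at $m\kappa v$, solvency at $\bar L_{\text{solv}}$ and statute at $\bar L_{\text{law}}$. Each enters multiplied by $(1-\psi)$, which gives \eqref{eq:ceiling}.
\item \emph{Lower end.} The cap floor forbids caps below $\underline{L}(v)$, because a cap below it snaps to unlimited exposure, which is dominated for every type. The cost rule zeroes out exposures below $\Lam_{\text{enf}}$. Together these give \eqref{eq:floor}.
\end{itemize}

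Between these bounds the map $K\mapsto\Lam(K;J,v)$ is strictly increasing, since it equals $(1-\psi)K$ there. Proposition~\ref{prop:sc} therefore applies on this interval: the Spence--Mirrlees condition holds, and the inherited Riley least-cost separating schedule is $\Lam^*(s)$ from \eqref{eq:schedule}.

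Second, I would use that $\Lam^*$ is continuous and strictly increasing, with $\Lam^*(s_F)=c_0$ and $\Lam^*(s)\to\infty$ as $s\to\alpha$. Hence it has a continuous inverse, and $\{s:\underline{\Lam}\le\Lam^*(s)\le\overline{\Lam}\}$ is the interval $[(\Lam^*)^{-1}(\underline{\Lam}),(\Lam^*)^{-1}(\overline{\Lam})]$. If $\underline{\Lam}<c_0$, I would adopt the convention that $\sfl=s_F$. The interval is non-empty iff $\underline{\Lam}\le\overline{\Lam}$, given that the floor exceeds $c_0$ or the window is clipped at $s_F$. On this interval each type can post the minimal implementing cap from Lemma~\ref{lem:impl} for $\Lam^*(s)$, so the Riley schedule is implementable there. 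Incentive compatibility restricted to the window holds because the restriction of a separating schedule satisfying single crossing remains separating.

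Third, I would handle the tails.

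\begin{itemize}
\item \emph{Types below $\sfl$.} Their separating exposure is not in the enforceable set, and posting $\Lam<\underline{\Lam}$ is impossible: it is either void or yields zero. Choosing $0$ or $\varnothing$ carries no information, since its cost is type-independent; this is the converse half of Proposition~\ref{prop:sc}. Such types therefore post the cheapest informative option, $\underline{\Lam}$, and all of them share that message.
\item \emph{Types above $\sbar$.} No exposure above $\overline{\Lam}$ is attainable, so they cannot distinguish themselves from one another and post $\overline{\Lam}$.
\end{itemize}

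The remaining task, and the main obstacle, is to verify that pooling at the two boundary messages is consistent with separation inside the window. The point of concern is at the edges: the pooled types at $\underline{\Lam}$ include $\sfl$ itself, and type $\sfl$ must not strictly prefer some interior message, while interior types must not mimic the pool. The approach I would try first follows the Riley construction with boundary pools, as in the solvency truncation of \citet{bauer2026b}. Pooled beliefs at each boundary are set equal to the conditional mean of the pool. Monotonicity of $C$ in $s$, from Proposition~\ref{prop:sc}, then gives one-sided deviation checks. For the upper pool I would inherit the argument of \citet{bauer2026b} verbatim. For the lower pool I would mirror it: by single crossing, the pool's lower average belief makes $\underline{\Lam}$ less attractive to interior types than their own schedule point. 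This mirrored argument is the step most likely to need an off-path belief assumption, which I would state explicitly as inherited from the companion paper's maintained primitives.
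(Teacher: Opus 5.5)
\textbf{Assessment.} Your Steps 1--3 (enforceable exposure set, strict monotonicity of the wedge between the bounds, inversion of $\Lam^*$, tails sent to the boundary messages) are exactly the argument of Appendix~A.2, and the paper stops there. It reads ``separation'' as legal implementability of the inherited least-cost exposure $\Lam^*(s)$, i.e.\ which portions of the separating allocation of \citet{bauer2026b} survive the wedge. It never checks deviations across the window boundaries.

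\textbf{The gap.} The extra step you call the main obstacle cannot be closed as you plan. It fails in exactly the direction you flag but then do not treat. A type $s<\sfl$ in the floor pool receives the pooled inference $E[s'\mid s'<\sfl]<\sfl$. If it instead posts $\Lam^*(s')$ with $s'\downarrow\sfl$, its cost rises only by $\theta\pi(1-\rho(s))\bigl(\Lam^*(s')-\underline{\Lam}\bigr)\to0$, while its inferred type jumps discretely to about $\sfl$. With a price strictly increasing in the inferred type, this deviation is strictly profitable. These messages are on path, with beliefs pinned by Bayes' rule, so no off-path belief assumption can block it. Symmetrically, a type just below $\sbar$ gains discretely by moving up to the ceiling pool at vanishing extra cost.

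Restricting a single-crossing schedule to the window preserves incentive compatibility only among window types. The check that ``interior types do not mimic the pool'' is the direction that holds automatically. A genuine PBE with boundary pools has the structure of \citet{chosobel1990}: message gaps just above $\underline{\Lam}$ and just below $\overline{\Lam}$, with the separating segment re-anchored at indifferent marginal types. Its cutoffs are therefore not $(\Lam^*)^{-1}(\underline{\Lam})$ and $(\Lam^*)^{-1}(\overline{\Lam})$. You can either adopt the paper's implementability reading and drop the step, in which case Steps 1--3 suffice, or recognize that you would be proving a different proposition.

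\textbf{Three smaller points.}

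First, $\Lam^*(s)\to\infty$ as $s\to\alpha$ is false whenever $\rho_{\max}<1$, which holds in every calibrated occupation. You take this claim from the text, and A.2 asserts it too. In fact $\ln[(1-\rho(s_F))/(1-\rho(s))]$ tends to the finite limit $\ln[(1-\rho(s_F))/(1-\rho_{\max})]$. So $(\Lam^*)^{-1}(\overline{\Lam})$ is undefined once $\overline{\Lam}>\Lam^*(\alpha)$, and you need the mirror image of your bottom clip, $\sbar=\alpha$. That is the convention Proposition~\ref{prop:theta} uses when the ceiling is slack.

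Second, clipping at $s_F$ does not rescue ``non-empty iff $\underline{\Lam}\leq\overline{\Lam}$''. The set $\{s:\underline{\Lam}\le\Lam^*(s)\le\overline{\Lam}\}$ is non-empty iff $\max\{\underline{\Lam},c_0\}\leq\min\{\overline{\Lam},\Lam^*(\alpha)\}$. Already $\psi=1$ gives $\underline{\Lam}=\overline{\Lam}=0<c_0$ with an empty window; the paper's ``non-emptiness is immediate'' glosses over the same point.

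Third, $\varnothing$ is excluded by dominance, not because its cost is type-independent. The converse of Proposition~\ref{prop:sc} concerns a wedge that is constant over the whole feasible range, not a single zero-exposure message. Why pool members prefer $\underline{\Lam}$ to an available $\Lam=0$ therefore needs its own belief-based argument.
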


\begin{proof}
Appendix~A.2. Monotonicity and convexity of $\Lam^*$ give invertibility; the pooling claims follow from the fact that a provider constrained to post at least $\underline{\Lam}$ cannot separate downward, and one unable to post more than $\overline{\Lam}$ cannot separate upward.
\end{proof}

The upper bound generalizes the solvency truncation of \citet[Prop.~6]{bauer2026b}. The lower bound is new, and it inverts a standard intuition.

\begin{corollary}[The floor that blinds]\label{cor:floor}
Let the statutory default be unlimited fault liability, and let standard-terms control void any cap below $f v$ with no reduction to the permissible maximum, so that the enforceable choice set is $\{K\geq f v\}\cup\{\varnothing\}$ and no provider can retain exposure below $(1-\psi)f v$. Then every type with $\Lam^*(s)<(1-\psi)f v$ optimally posts the boundary cap $K=f v$ and pools there; $\partial\sfl/\partial f>0$; and in the limit $f v\geq\overline{\Lam}$ the market pools entirely.
\end{corollary}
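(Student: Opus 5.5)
The plan is to reduce the corollary to Proposition~\ref{prop:window} by showing that the void-snap rule generates exactly the floor term $\underline{L}(v)$ in \eqref{eq:floor}, and then to read off the three claims. The first step is to fix the enforceable choice set. Under the stated standard-terms regime, any attempted cap $K<fv$ is voided entirely and replaced by unlimited default liability. I will argue that this outcome is dominated for every type. Unlimited exposure raises the expected cost \eqref{eq:cost} without bound, or at least up to the solvency cap $\bar L_{\text{solv}}$, while under the inherited beliefs it signals nothing beyond what posting $K=fv$ already signals. The effective menu is therefore $\{K\geq fv\}$. By Lemma~\ref{lem:impl}, the minimal implementing cap maps this into retained exposure $\Lam\geq(1-\psi)\min\{fv,m\kappa v\}$. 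For the corollary's content I take the penalty cap slack at $fv$, which gives the floor $(1-\psi)fv$. This identifies $\underline{L}(v)=fv$ in \eqref{eq:floor}.

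The second step covers the types with $\Lam^*(s)<(1-\psi)fv$. Such a type would, absent the floor, post its least-cost separating exposure. That exposure is now infeasible, and by Proposition~\ref{prop:sc} the cost $C$ is strictly increasing in $\Lam$ on the enforceable range. Among feasible caps, the boundary $K=fv$ therefore minimizes liability cost. I will check that no such type can profitably deviate upward to mimic a type inside the window. That deviation costs at least what the Riley schedule makes the mimicked type pay, and single crossing ($\partial^2C/\partial s\partial\Lam<0$) ensures that a lower type finds this weakly unprofitable. It follows that all these types post $fv$ and pool, as the pooling clause of Proposition~\ref{prop:window} states. I expect this incentive check to be the main obstacle, because the pooled types receive the pooled belief rather than their own. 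I will handle it by invoking the maintained off-path beliefs of \citet{bauer2026b} and the least-cost construction. The separating schedule is restarted at $\sfl$ with the pooled belief as its reservation, so the schedule on $[\sfl,\sbar]$ may need to be shifted up. I will state this as inherited rather than re-solve the buyer problem.

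The third step is the comparative static. When the floor term dominates $\Lam_{\text{enf}}$, we have $\sfl=(\Lam^*)^{-1}((1-\psi)fv)$. Because $\Lam^*$ is strictly increasing and continuous, from \eqref{eq:schedule} and $\rho'>0$, its inverse is strictly increasing. Hence $\partial\sfl/\partial f=(1-\psi)v/\Lam^{*\prime}(\sfl)>0$ whenever $\psi<1$ and the floor binds. Here $\Lam^{*\prime}=\frac{\chi}{\theta\pi}\rho'/(1-\rho)>0$ is positive and finite in the interior. Finally, if $fv\geq\overline{\Lam}$ then $\underline{\Lam}\geq\overline{\Lam}$ and the window is empty or degenerate. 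In that case every type either faces a single feasible exposure level or lies beyond the ceiling, so Proposition~\ref{prop:window} gives full pooling. If the penalty cap binds below $fv$, then $\Lam$ is constant on the feasible set, and the converse of Proposition~\ref{prop:sc} gives the same conclusion directly.
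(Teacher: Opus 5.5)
Your reduction follows the paper's own route. The corollary has no separate proof; its support is the void-snap remark after it (sub-floor caps collapse to unlimited exposure, so the message space is truncated at $fv$) together with Appendix~A.2. A.2 sets $\underline{L}(v)=fv$ in \eqref{eq:floor}, reads pooling off the fact that $\underline{\Lam}$ is the cheapest admissible message for every type with $\Lam^*(s)<\underline{\Lam}$, and gets the comparative static from monotonicity of $(\Lam^*)^{-1}$. Your explicit derivative $(1-\psi)v/\Lam^{*\prime}(\sfl)$ is correct. So is your caveat that it is strictly positive only when $\psi<1$ and the floor term actually binds, which is sharper than the statement (compare $f_{\text{bind}}$ in Section~\ref{sec:robust}).

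One step fails as written. $fv\geq\overline{\Lam}$ does not imply $\underline{\Lam}\geq\overline{\Lam}$ when $0<\psi<1$, because $\underline{\Lam}\geq(1-\psi)fv$ only gives $\underline{\Lam}\geq(1-\psi)\overline{\Lam}$. Take $\psi=\tfrac12$, $\min\{\bar L_{\text{solv}},\bar L_{\text{law}},m\kappa v\}=2$, $fv=1.5$ and $\Lam_{\text{enf}}\leq fv$. Then $fv\geq\overline{\Lam}=1$, but $[\underline{\Lam},\overline{\Lam}]=[0.75,1]$, so types whose $\Lam^*(s)$ lies in that interval still separate. The hypothesis has to be read as $(1-\psi)fv\geq\overline{\Lam}$, i.e.\ $fv\geq\min\{\bar L_{\text{solv}},\bar L_{\text{law}},m\kappa v\}$. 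Under that reading every enforceable cap delivers the same truncated exposure, and the converse of Proposition~\ref{prop:sc} gives full pooling directly, exactly as in your penalty-cap case.

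Your step-two incentive check also cannot be completed the way you sketch, and the paper does not complete it either. If the pool receives its average belief, a type just below $\sfl$ gets a discrete jump in belief at vanishing extra cost by posting $\Lam^*(\sfl+\epsilon)$. So the truncated profile $\max\{\Lam^*(s),\underline{\Lam}\}$ is not an equilibrium of the inherited environment.

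Your repair, shifting the schedule up, cuts both ways. Because $\Lam^{*\prime}$ does not depend on the level of $\Lam$, the schedule $\Lam^*(s)+\underline{\Lam}-c_0$ started at $s_F$ satisfies the same local incentive condition and separates every type below the ceiling. The floor then acts, if at all, by lowering $\sbar$ rather than raising $\sfl$. This is how the paper itself treats $c_0$ as a boundary condition in Section~\ref{sec:robust}.

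Pooling below $\sfl$ is therefore not delivered by single crossing plus the inherited beliefs. It holds only under the paper's framing (the timing-and-equilibrium paragraph of Section~\ref{sec:model}): the result characterizes which portions of the fixed allocation $\Lam^*$ remain legally implementable. State the pooling clause at that level, as A.2 implicitly does, and drop the shift. Shifting would move $\sfl$ and $\sbar$ off the $(\Lam^*)^{-1}$ formulas your step three uses.

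Finally, your claim that the no-cap outcome signals nothing beyond $K=fv$ is wrong: in exposure space the no-cap outcome is the ceiling message, not the floor. You do not need that claim, because the corollary's hypothesis already bounds retained exposure below by $(1-\psi)fv$.
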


The mechanism is worth stating in words, because it is counterintuitive and it is the paper's sharpest result. A mandatory floor on retained liability is designed to protect clients from providers who contract out of responsibility. It works: every client is guaranteed a minimum recovery. But a guarantee that every provider must give is, by construction, uninformative. The microfoundation matters. Because a voided cap falls away entirely rather than being reduced, attempting a cap below $f v$ returns the provider to unlimited exposure and is dominated for every type; the truncation of the message space is an implication of the void-snap, and pooling occurs at the boundary message of a truncated signal space, the standard configuration in signaling games with bounded messages \citep{chosobel1990}. The nearest formal antecedents are the licensure-pooling result of \citet{leland1979} and the crowding-out of voluntary by mandatory disclosure in \citet{fishmanhagerty2003}; the difference is that the mandated object here is neither a quality standard nor a disclosure but a minimum retained liability, and what it censors is the lower tail of a cap schedule. Two scope conditions follow directly from the law. First, the control attaches to standard terms: a genuinely negotiated cap escapes it (\S~305(1) s.~3 BGB), and the case law's demanding reading of negotiation makes escapes rare---yielding a prediction the model would not otherwise generate, that separation below the floor should be observed only in bespoke contracts, never in standard forms. Second, the case law measures the adequacy of a cap against the contract-typical foreseeable damage and the drafter's control of the risk, not against the price of the service---which is why the floor in \eqref{eq:floor} scales with the value at stake $v$ rather than with fees. German standard-terms control is the sharpest instance in our sample; Swiss law, which permits waiver of liability for ordinary negligence and knows reduction to the permissible level, produces a correspondingly wider window (Figure~\ref{fig:heatmap}).

\begin{remark}[Scope of the floor result]\label{rem:scope}
Corollary~\ref{cor:floor} rests on four cumulative institutional conditions: (i) the statutory default is unlimited---or at least not binding from below---fault liability; (ii) contracting runs through standard terms, since genuinely negotiated caps escape the control; (iii) a void cap falls away entirely, with no reduction to the permissible level; and (iv) the retained risk actually stays with the provider ($\psi<1$) rather than being displaced to an insurer, employer, state or pool. Where any condition fails, the floor is absent and the lower margin reverts to enforcement cost---which is exactly how the British and American cells behave in Table~\ref{tab:heatmap}. The floor is thus a configuration-specific mechanism, not a property of liability regulation as such; the global sensitivity analysis of Section~\ref{sec:robust} makes the same point quantitatively.
\end{remark}

\subsection{The penalty doctrine and the provability gross-up}

\begin{proposition}[Provability gross-up under the penalty doctrine]\label{prop:penalty}
Separation requires retained exposure satisfying $\Lam\geq v/\theta$: the loss must be grossed up by the inverse of the probability that fault can be established \citep[Prop.~6]{bauer2026a}. Under a penalty doctrine that caps enforceable agreed damages at $m$ times compensatory loss, a separating contract exists only if
\begin{equation}\label{eq:penaltycond}
m\;\geq\;\frac{1}{\theta}.
\end{equation}
If \eqref{eq:penaltycond} fails, the firm can restore separation only by raising $\theta$ to at least $1/m$---an operational investment in verifiability---or by shifting to an instrument outside the doctrine.
\end{proposition}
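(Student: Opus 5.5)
The plan is to combine the imported necessary condition $\Lam\geq v/\theta$ from \citet[Prop.~6]{bauer2026a} with the penalty kink of the wedge \eqref{eq:wedge}, through the implementation correspondence of Lemma~\ref{lem:impl}. First, I take the gross-up as given and restate it in exposure space. The pledge must make a low type's expected payout at least the value it would misappropriate by mimicking. That payout is $\theta\pi(1-\rho)\Lam$ against a per-failure gain of order $v$, and this yields $\theta\Lam\geq v$ per verified failure. I will note that this is the retained-exposure condition, so it applies to $\Lam$ and not to the posted cap $K$.

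Second, I bound the enforceable exposure from above using Definition~\ref{def:wedge}. For any posted $K$, $\Lam(K;J,v)\leq(1-\psi)\min\{K,m\kappa v\}\leq m\kappa v$. By Lemma~\ref{lem:impl}, no exposure above the penalty kink is implementable: raising $K$ beyond $m\kappa v$ leaves $\Lam$ flat, and by Proposition~\ref{prop:sc} it carries no information. In the liquidated-damages channel the proposition is about, compensatory loss is $v$, and I set $\kappa=1$ and $\psi=0$ as the favourable case; the general case only tightens the bound. The supremum of implementable exposure is therefore $mv$. Combining this with the first step, a separating contract requires $v/\theta\leq\Lam\leq mv$, and hence $m\geq1/\theta$, which is \eqref{eq:penaltycond}. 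This gives necessity.

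Third, for the remedy clause I read \eqref{eq:penaltycond} as a constraint on $\theta$ with $m$ fixed by doctrine: it holds if and only if $\theta\geq1/m$. Any instrument inside the doctrine satisfies $\Lam\leq mv$, so the only remaining margins are $\theta$ itself, which is the operational investment, or an instrument not subject to the $\min\{\cdot,mv\}$ truncation, such as insurance or service credits. Those lie outside the wedge's penalty factor. For the converse direction implied by ``restore separation'', I will check that once $\theta\geq1/m$ there is a nonempty exposure interval $[v/\theta,mv]$. On that interval $\Lam$ is strictly increasing in $K$ because the penalty cap is slack below the kink, so Proposition~\ref{prop:sc} applies. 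This assumes enforcement is viable there and no cap floor exceeds $mv$; I will state those caveats explicitly, because the proposition says ``only if'' and I do not need full sufficiency.

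The main obstacle is the first step's interface. The gross-up in \citet{bauer2026a} is stated in terms of a generic pledge, and I have to justify that the relevant ``loss'' against which $m$ is measured is the same $v$ that appears in $v/\theta$. In other words, the courts' compensatory benchmark must coincide with the model's per-failure value at stake. I will handle this by an explicit modelling identification: compensatory loss per verified failure equals $v$. I will also remark that if courts measure against a smaller actual loss, the condition only becomes stricter, so the inequality direction is robust.
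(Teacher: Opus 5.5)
Your argument is the paper's own proof. You take $\theta\Lam\geq v$ from \citet[Prop.~6]{bauer2026a}, bound enforceable exposure by the penalty kink of \eqref{eq:wedge} to get $\Lam\leq mv$, intersect the two to obtain $m\geq1/\theta$, and read the remedy as raising $\theta$ to $1/m$ or leaving the liquidated-damages channel. Your extra care with Lemma~\ref{lem:impl} and the viability and floor caveats on the converse goes beyond what the paper writes but does not change the route.

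One small slip: $\psi>0$ does tighten the cap, but $\kappa>1$ loosens it ($m\kappa v>mv$), so the general case does not ``only tighten'' the bound. You should either fix $\kappa=1$, as the paper implicitly does, or measure compensatory loss as $\kappa v$ on both sides so that the ratio condition is unchanged.
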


\begin{proof}
Appendix~A.3.
\end{proof}

\begin{remark}\label{rem:penalty}
Condition~\eqref{eq:penaltycond} governs the \emph{deterrence-adequate} exposure $\Lam\geq v/\theta$ inherited from the two-type benchmark of \citet{bauer2026a}. Below the threshold, a truncated window certifying only low types can persist, because the continuum least-cost schedule starts at $c_0<v/\theta$; the $m$-sweep in Section~\ref{sec:robust} displays both objects and their relation.
\end{remark}

Proposition~\ref{prop:penalty} has a clean empirical content. In the calibrated civil-law negotiated-penalty channels $m=\infty$ and the condition does not bind; the firm's response to low verifiability is contractual. In common-law jurisdictions the condition binds whenever $\theta<1/m$, and the firm's response must be operational. Table~\ref{tab:penalty} reports the required multiple $1/\theta$ for each calibrated occupation. At the central English multiple it is satisfied in four of the ten common-law cells---the four British market occupations---and in none of the American cells; social work, with $1/\theta=4$, fails even at the top of the English range. The count is itself band-dependent: at the lower endpoint $m=1.2$, the previous version's calibration, it falls to one---audit, whose regulated documentation gives it the highest baseline verifiability in our sample. The mechanism---a doctrine of contract law determining whether a service firm's optimal quality-assurance response is legal or operational---is therefore, in England, a live margin within the currently defensible range rather than a settled classification, while in the United States it binds at every calibrated verifiability. To our knowledge this is the first formal statement of that mechanism.

\begin{table}[h]\centering\small
\caption{The penalty-doctrine condition $m\geq1/\theta$. UK evaluated at the central multiple $m=3$; entries in parentheses give the status at the lower endpoint $m=1.2$ of the defensible range, the previous version's calibration.}
\label{tab:penalty}
\begin{tabular}{lccc}
\toprule
Occupation & required $1/\theta$ & UK ($m=3$, central) & US ($m=1.0$)\\
\midrule
Software / web development & 1.25 & \textbf{holds} (fails at 1.2) & fails\\
Legal research \& advisory & 2.00 & \textbf{holds} (fails at 1.2) & fails\\
Radiology / imaging & 1.43 & \textbf{holds} (fails at 1.2) & fails\\
Audit \& assurance & 1.18 & \textbf{holds} (holds at 1.2) & fails\\
Social work / child protection & 4.00 & fails & fails\\
\bottomrule
\end{tabular}
\end{table}

\subsection{The third-payer wedge}

In many expert-service markets the party who pays is not the party who benefits. Let $\omega\in[0,1]$ denote the weight the paying principal places on the beneficiary's outcome, so that the effective stake is $\chi=\omega\zeta v\pi$.

\begin{proposition}[Payer alignment threshold]\label{prop:payer}
There exists $\omega^*(J,\text{occ})>0$ such that a non-degenerate window exists iff $\omega\geq\omega^*$. $\omega^*$ is increasing in $c_0$ and in $\underline{L}$, and decreasing in $\bar L_{\text{solv}}$ and $\theta$.
\end{proposition}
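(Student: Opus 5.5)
The plan is to reduce the statement to a monotone comparison between the window bounds of Proposition~\ref{prop:window} and the stake $\chi=\omega\zeta v\pi$, which enters the schedule \eqref{eq:schedule} only through the slope factor $\chi/(\theta\pi)$. I first note that $\omega$ leaves the institutional bounds $\overline{\Lam}$ and $\underline{\Lam}$ in \eqref{eq:ceiling}--\eqref{eq:floor} unchanged. It acts only on $\Lam^*(s;\omega)=c_0+\frac{\omega\zeta v}{\theta}\ln\frac{1-\rho(s_F)}{1-\rho(s)}$. For fixed $s>s_F$ the log term is positive, so $\Lam^*$ is strictly increasing and linear in $\omega$, and it is flat at $c_0$ when $\omega=0$.

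The key step is to define non-degeneracy as $\sbar(\omega)-\sfl(\omega)>0$ and to compute both endpoints explicitly by inverting the schedule. Writing $\Delta(\Lam)=1-(1-\rho(s_F))\exp[-\theta(\Lam-c_0)/(\omega\zeta v)]$, we have $\sbar=\rho^{-1}(\Delta(\overline{\Lam}))$ and $\sfl=\rho^{-1}(\Delta(\underline{\Lam}))$, with truncation at $s_F$ and at $\alpha$.

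Here lies the main obstacle: the comparative statics in $\omega$ do not point one way. A larger stake makes the schedule steeper. That pushes $\sbar$ down, because fewer types fit under the ceiling, but it also pushes $\sfl$ down, which helps when the floor binds. So non-degeneracy is not obviously monotone in $\omega$ from the raw inversion alone. I would resolve this by using the structure at $\omega\to0$. There the required pledge collapses to $c_0$ for every type. Because $\Lam^*$ is flat, no positive-measure set of types satisfies $\underline{\Lam}\le\Lam^*(s)\le\overline{\Lam}$ with distinct messages whenever $c_0<\underline{\Lam}$. Put differently, the stake is too small for the Riley incentive constraint to require any increasing pledge that the floor can accommodate. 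More precisely, I would argue that a separating schedule must clear the floor at some interior type while staying below the ceiling, i.e.\ $\Lam^*(\alpha^-;\omega)$ truncated at the solvency cap must exceed $\underline{\Lam}$. This is the condition I would take as the operative meaning of a non-degenerate window, and I would check it against the appendix construction. I expect the careful step to be justifying that this crossing condition is monotone. It is: for fixed $s$ the required exposure rises in $\omega$, so the set of $\omega$ at which some type in the window has $\Lam^*(s;\omega)\in[\underline{\Lam},\overline{\Lam}]$ with $\underline{\Lam}>c_0$ is an upper set. Its infimum is $\omega^*=\theta(\underline{\Lam}-c_0)/\bigl(\zeta v\,\ln\frac{1-\rho(s_F)}{1-\rho(\bar s_{\text{solv}})}\bigr)$, which is positive because $\underline{\Lam}\ge c_0$ with the enforcement component strictly above $c_0$ under every cost rule. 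Here $\bar s_{\text{solv}}$ is the highest type certifiable under the ceiling.

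The comparative statics then follow by inspecting this expression. $\omega^*$ rises with $\underline{L}$ through $\underline{\Lam}$, and rises with $c_0$ both directly and through $\Lam_{\text{enf}}$, whose viability threshold increases in $c_0$. A larger $\bar L_{\text{solv}}$ raises $\bar s_{\text{solv}}$ and hence the log denominator, so $\omega^*$ falls. For $\theta$ the effect is ambiguous: it enters the numerator, but higher verifiability also lowers $\Lam_{\text{enf}}$ and raises the effective exposure. This is where I expect to need the calibration-relevant regime in which enforcement viability dominates, or a reinterpretation of $\theta$ through $\chi/(\theta\pi)$ as the per-verified-failure scaling. I would flag the $\theta$ sign as the step most at risk and, if necessary, restrict it to the region where the floor is set by $\Lam_{\text{enf}}$.
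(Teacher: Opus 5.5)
You are right that the window width is not monotone in $\omega$; the paper's appendix infers a threshold from two \emph{decreasing} endpoints, which does not follow. But your threshold is pinned to the wrong type, and the comparative statics do not go through. Two of the four stated signs are in fact false in the paper's own model.

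The upper-set property comes from the \emph{range} of the schedule, not from pointwise monotonicity, since a fixed type eventually exits through the ceiling. For $\omega>0$, $\Lambda^*(\cdot;\omega)$ maps $[s_F,\alpha]$ increasingly onto $[c_0,\,c_0+\omega\zeta v\,g(\alpha)/\theta]$, with $g(\alpha)=\ln\frac{1-\rho(s_F)}{1-\rho_{\max}}$. This interval expands in $\omega$, and the window is non-degenerate iff it meets $[\underline{\Lambda},\overline{\Lambda}]$ in more than a point. The construction needs $g(\alpha)<\infty$, i.e.\ $\rho_{\max}<1$. That holds in the calibration, but it contradicts the remark before Proposition~\ref{prop:window} that $\Lambda^*\to\infty$; under that remark every $\omega>0$ would produce a window. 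Hence, when $\overline{\Lambda}>\underline{\Lambda}>c_0$,
\[
\omega^*=\frac{\theta\,(\underline{\Lambda}-c_0)}{\zeta v\,g(\alpha)},
\]
and the window is non-degenerate iff $\omega>\omega^*$; at $\omega^*$ it is $\{\alpha\}$.

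Near the threshold the ceiling is slack, so your $\bar s_{\text{solv}}$ is just $\alpha$. Read as an interior ceiling type, it depends on $\omega$, and your formula is then self-consistent only if $\underline{\Lambda}=\overline{\Lambda}$. There is therefore no solvency channel through the logarithm. $\bar L_{\text{solv}}$ enters only via the feasibility condition $\overline{\Lambda}>\max\{c_0,\underline{\Lambda}\}$, which moves $\omega^*$ from $+\infty$ to a finite value, so the solvency claim holds only weakly.

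Nor is positivity automatic: $\omega^*>0$ iff $\underline{\Lambda}>c_0$, and nothing forces the enforcement floor above $c_0$. The paper's own $f_{\text{bind}}$ values ($0.40$, $0.16$, $0.13$, $0.13$) coincide with $c_0/\bar v$. So in the German cells positivity comes from the mandatory floor alone. In cells whose window is the whole type space (Swiss and British software), a window exists for every $\omega>0$ and $\omega^*=0$.

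Differentiating the correct threshold gives
\[
\frac{\partial\omega^*}{\partial c_0}=\frac{\theta}{\zeta v\,g(\alpha)}\Bigl(\frac{\partial\underline{\Lambda}}{\partial c_0}-1\Bigr),
\qquad
\frac{\partial\omega^*}{\partial\theta}=\frac{\underline{\Lambda}-c_0+\theta\,\partial\underline{\Lambda}/\partial\theta}{\zeta v\,g(\alpha)} .
\]
The direct $c_0$ term enters your own expression with a minus sign, so ``rises with $c_0$ directly'' is a sign error. Wherever the mandatory floor binds, $\omega^*$ strictly \emph{falls} in $c_0$. This matches the paper's robustness finding that raising the German $c_0$ by half widens the software window to the full type space.

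With a $\theta$-invariant floor, which is how the proof of Proposition~\ref{prop:theta} treats both floors, $\omega^*$ strictly \emph{rises} in $\theta$. This is Proposition~\ref{prop:theta}(i) evaluated at the threshold. So the $c_0$ and $\theta$ signs are reversed in exactly the German configuration for which $\omega^*=0.12$ and $0.13$ are reported. The appendix leaves these signs to ``differentiating the boundaries'' and supplies nothing that could rescue them.

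What you can prove is the following:
\begin{itemize}
\item existence under $\rho_{\max}<1$ and $\overline{\Lambda}>\underline{\Lambda}>c_0$;
\item weak monotonicity in $\underline{L}$ and $\bar L_{\text{solv}}$;
\item the stated $c_0$ sign only under an enforcement-bound floor with $(1-\psi)\,\partial\Lambda_{\text{enf}}/\partial c_0>1$;
\item the stated $\theta$ sign only under a $\theta$-dependent enforcement floor with $\theta\,\partial\underline{\Lambda}/\partial\theta<-(\underline{\Lambda}-c_0)$.
\end{itemize}
The last condition requires tying the probability of prevailing to $\theta$, which the model does not do. Your proposed restriction to enforcement-bound floors is therefore not enough by itself. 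State the proposition with these qualifications rather than aiming at the unqualified claim.
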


\begin{proof}
Appendix~A.4; the schedule \eqref{eq:schedule} is proportional to $\chi$, hence to $\omega$, while the floor and ceiling are independent of $\omega$, so the window is monotone in $\omega$ and the threshold is well defined.
\end{proof}

The calibrated thresholds are modest for market services---$\omega^*=0.12$ for software development and $0.13$ for legal advice in Germany---and unattainable for social work, where no $\omega\leq1$ produces a window because the ceiling binds first. Proposition~\ref{prop:payer} formalizes what practitioners in publicly financed human services describe informally: a quality signal addressed to a budget-constrained purchasing authority reaches no one with the willingness to pay for it.

\section{A disciplined calibration exercise}\label{sec:calib}

\subsection{Approach}

We follow the discipline of the companion papers: parameters are identified in sign and order of magnitude, not estimated. Where a published estimate exists we use it; where it does not, we place the parameter ordinally and say so. Every value carries a source in Appendix~C. The numeraire is $v=1\equiv\text{EUR }5{,}000$, the median freelance web-development project.

Two caveats govern the whole exercise. First, several of the sharpest empirical anchors are contested. The colonoscopy deskilling study \citep{budzyn2025} attracted peer-reviewed correspondence disputing its causal reading and carries a corrigendum restoring an omitted covariate; we therefore use it to place $\gamma$ ordinally high in radiology rather than to pin a value. Second, industry sources (code-churn telemetry, insurance capacity announcements) are not peer reviewed and are used only where peer-reviewed alternatives do not exist. We report results as regime classifications and directional comparisons, and we test in Section~\ref{sec:robust} whether those survive perturbation.

\subsection{Occupation vectors}

Table~\ref{tab:occ} reports the five occupation vectors. Software development anchors $\pi$ on the security-flaw literature: \citet{pearce2022} find 39.3\% of top Copilot suggestions vulnerable across 1{,}689 generated programs, \citet{fu2023} find 35.8\% of snippets in live repositories CWE-bearing, and a subsequent replication reports 27.3\%; we set $\pi=0.30$. Its erosion rate is the highest in the sample, following the employment evidence of \citet{brynjolfsson2025canaries}, who find a 16\% relative employment decline for workers aged 22--25 in exposed occupations and roughly 20\% for software developers---the on-ramp tasks through which juniors historically built fallback capability are precisely those the AI now performs. Verifiability is highest here after audit: git blame, test suites, reproducible builds and formal acceptance under a \emph{Werkvertrag} make defects attributable.

Legal research takes $\pi=0.25$ from \citet{magesh2025}, whose preregistered evaluation finds hallucination rates of 17\% for Lexis+ AI and 33\% for Westlaw AI-Assisted Research, against 58--88\% for general-purpose models \citep{dahl2024}. Its verifiability is only middling: citations are checkable, but the causal attribution of an adverse outcome to advice is contested.

Radiology takes a low failure rate ($\pi=0.10$) and a high erosion rate, and is the occupation where the human check is most compromised by the machine: \citet{dratsch2023} find that the share of correctly rated mammograms among inexperienced readers falls from 79.7\% to 19.8\% when the AI suggestion is wrong.

Social work is the break case. Its AI capability frontier is low because the core service is relational; its failure rate where AI is used is high \citep{moore2025}; its verifiability is the lowest in the sample, since formal documentation is extensive but forensic causal attribution is weak and diffused across agencies. Decisively, its value at stake is an order of magnitude above what an individual provider or small non-profit can bear: harm is decoupled from the fee, which is set per \emph{Fachleistungsstunde} at roughly EUR 55.

\subsection{Jurisdiction vectors and the diagnosis map}

Table~\ref{tab:jur} reports the jurisdiction vectors. Figure~\ref{fig:heatmap} gives the resulting diagnosis map: the share of the type space that separates in each occupation--jurisdiction cell.

\begin{table}[t]\centering\footnotesize
\caption{Occupation vectors. Numeraire $v=1\equiv$ EUR 5{,}000. Sources in Appendix~C.2.}
\label{tab:occ}
\begin{tabular}{lccccccccc}
\toprule
Occupation & $\alpha$ & $\pi$ & $\varphi$ & $\gamma$ & $\theta_0$ & $\rho_{\max}$ & $\bar v$ & $\bar L_{\text{solv}}$ & $h_{\min}$\\
\midrule
Software / web development & 0.88 & 0.30 & 0.60 & 0.55 & 0.80 & 0.85 & 1.0 & 3.0 & 0.753\\
Legal research \& advisory & 0.82 & 0.25 & 0.45 & 0.35 & 0.50 & 0.90 & 2.5 & 4.0 & 0.757\\
Radiology / imaging & 0.90 & 0.10 & 0.40 & 0.45 & 0.70 & 0.88 & 3.0 & 4.0 & 0.918\\
Audit \& assurance & 0.85 & 0.20 & 0.50 & 0.30 & 0.85 & 0.90 & 3.0 & 5.0 & 0.750\\
Social work / child protection & 0.35 & 0.35 & 0.20 & 0.30 & 0.25 & 0.60 & 8.0 & 0.5 & 0.811\\
\bottomrule
\end{tabular}
\end{table}

\begin{table}[t]\centering\footnotesize
\caption{Jurisdiction vectors. $m$: enforceable multiple of loss; $\psi$: displacement; $f$: cap floor as a fraction of $v$---an illustrative benchmark consistent with ordinal legal severity, not a measured quantity (Appendix~C.3); $\kappa$: collective-redress aggregation.}
\label{tab:jur}
\begin{tabular}{lcccccc}
\toprule
Jurisdiction & Cost rule & $c_0$ & $m$ & $\psi_{\text{public}}$ & $f$ & $\kappa$\\
\midrule
Germany & loser pays (degressive) & 0.40 & $\infty$ & 0.95 & 0.60 & 1\\
Austria & loser pays (degressive) & 0.38 & $\infty$ & 0.95 & 0.45 & 1\\
Switzerland & loser pays & 0.55 & $\infty$ & 1.00 & 0.20 & 1\\
United Kingdom & English rule + small claims & 0.90 & 3\textsuperscript{$\dagger$} & 0.15 & 0.00 & 1\\
UK (NHS channel) & English rule + small claims & 0.90 & 3\textsuperscript{$\dagger$} & 0.90 & 0.00 & 1\\
United States (B2B) & American rule + contingency & 1.50 & 1.00 & 0.55 & 0.00 & 1\\
United States (B2C) & American rule + contingency & 1.50 & 1.00 & 0.55 & 0.00 & 8\\
\bottomrule
\end{tabular}

\smallskip
{\footnotesize $\dagger$\,Central value; the defensible English range is $1.2$--$4$ (\emph{Cavendish} [2015] UKSC 67; \emph{Houssein} [2024] EWCA Civ 721, [2025] EWHC 2749 (Ch), affirmed [2026] EWCA Civ 830). The first version of this paper used the lower endpoint $1.2$; see the correction note in Section~\ref{sec:model} and the band sensitivity in Section~\ref{sec:robust}.}
\end{table}

\begin{figure}[t]\centering
\includegraphics[width=0.72\textwidth]{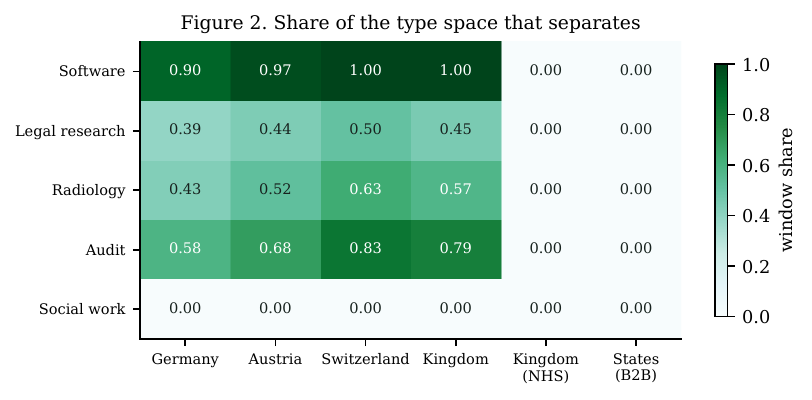}
\caption{Diagnosis map: share of the type space that separates. Zero cells are institutional failures, not technological ones.}
\label{fig:heatmap}
\end{figure}

\begin{table}[t]\centering\small
\caption{Share of the type space that separates, by occupation and jurisdiction}
\label{tab:heatmap}
\begin{tabular}{lcccccc}
\toprule
& Germany & Austria & Switzerland & UK & UK (NHS) & US (B2B)\\
\midrule
Software / web development & 0.898 & 0.970 & 1.000 & 1.000 & 0.000 & 0.000\\
Legal research \& advisory & 0.387 & 0.441 & 0.504 & 0.452 & 0.000 & 0.000\\
Radiology / imaging        & 0.430 & 0.518 & 0.629 & 0.571 & 0.000 & 0.000\\
Audit \& assurance         & 0.584 & 0.683 & 0.834 & 0.793 & 0.000 & 0.000\\
Social work / child protection & 0.000 & 0.000 & 0.000 & 0.000 & 0.000 & 0.000\\
\bottomrule
\end{tabular}
\end{table}

Four regularities stand out.

\emph{The high-floor, unbounded-multiple configuration admits the widest windows, with one instructive tie.} In legal research, radiology and audit the widest window arises strictly under the civil-law vectors---the configuration combining enforceable supra-compensatory exposure ($m=\infty$) with a mandatory floor---and the ordering Switzerland $>$ Austria $>$ Germany tracks the severity of the floor exactly. We state this as a property of parameter configurations that the legal families happen to instantiate, not as a claim about legal families as such. In software development, Switzerland attains a window covering the entire type space, and the United Kingdom, at the central English multiple, ties it: the required exposure schedule stays below the penalty cap $mv=3$ all the way to $\alpha$. This British cell is the single most band-sensitive number in the calibration---$0.167$ at the lower endpoint $m=1.2$, the previous version's value, rising to $1.000$ at the centre---and we flag it accordingly rather than resolve it; every other British window varies by less than $0.24$ over the band and plateaus once solvency overtakes the penalty cap, at $m\approx1.5$ (radiology) to $m\approx2$ (legal research, audit).

\emph{The binding margin differs by legal family, and within the common law by doctrine strength.} In Germany the floor binds and the ceiling is slack; in the United States the penalty ceiling binds and the floor is set by enforcement cost. The United Kingdom, at the central multiple, sits between the families: solvency binds in legal research, radiology and audit---the penalty cap $mv$ having risen above the solvency ceiling---while in software development the penalty cap remains binding until the window reaches the full type space. At the lower endpoint of the English range the penalty cap binds throughout, which was the first version's classification. This is the structural fact behind the $\theta$-investment reversal of Section~\ref{sec:theta}.

\emph{Pooling kills the signal as thoroughly as immunity.} The NHS channel produces an empty window for every occupation, despite the Crown Proceedings Act 1947 leaving the Crown broadly liable as a private person. What destroys the signal is not immunity but the absence of provider-specific premium differentiation. This yields a clean quasi-experimental prediction, discussed in Section~\ref{sec:implications}.

\emph{Social work fails everywhere at individual scale, and the scale at which it stops failing is informative.} In the four market occupations, failure is a matter of degree and margin; in social work the ceiling---individual and small-provider solvency, at $\bar L_{\text{solv}}=0.5$ against a value at stake of $v=8$---lies below the floor in every jurisdiction. Adding German official liability ($\psi=0.95$) compresses the ceiling by a further factor of twenty. The break is over-determined: displacement, solvency and payer misalignment each suffice on their own.

Varying the provider's balance sheet isolates which constraint is doing the work. Table~\ref{tab:scale} reports the window for social work as the solvency ceiling is scaled from an individual professional through a small provider to a large welfare organisation. At individual scale the window is empty in every jurisdiction. At small-provider scale it opens only marginally, and only where the cap floor is weak (Switzerland) or absent (United Kingdom). At large-organisation scale it opens everywhere, most widely---at the central English multiple---in the United Kingdom ($0.314$), followed by Switzerland ($0.302$); at the lower endpoint of the English range the British window falls to $0.188$ and Switzerland leads. The model therefore predicts that liability-based quality signaling in publicly financed human services can exist only at \emph{organisational} level and never at the level of the individual professional---which is precisely the level at which the relevant fallback capability resides, and precisely the level the criminal guarantor duty addresses.

The public-sector channel closes even that route. For a large German organisation acting under official liability ($\psi=0.95$) the window collapses to a degenerate interval $[0.03,0.04]$; under the Swiss \emph{Verantwortlichkeitsgesetz} ($\psi=1$) it vanishes outright. The practical implication is that the observed division of German child and youth services between public authorities and contracted independent providers is not signaling-neutral: identical work carries a liability signal in one channel and none in the other.

\begin{table}[h]\centering\small
\caption{Social work: window width by provider scale (private channel)}
\label{tab:scale}
\begin{tabular}{lccccc}
\toprule
Provider scale ($\bar L_{\text{solv}}$) & Germany & Austria & Switzerland & UK & US (B2B)\\
\midrule
Individual professional (0.5) & --- & --- & --- & --- & ---\\
Small firm / provider (2.0) & --- & --- & 0.007 & 0.018 & ---\\
Large organisation (15.0) & 0.233 & 0.259 & 0.302 & 0.314 & 0.092\\
\bottomrule
\end{tabular}
\end{table}

\subsection{Exclusion bands in ticket value}

Proposition~\ref{prop:window} holds $v$ fixed. Varying $v$ traces out the set of transaction sizes for which a given type can be certified. Combining the window with the provider's participation constraint---posting is worthwhile only if the separation premium covers the expected liability cost at the boundary plus the fixed posting cost \citep[Prop.~6]{bauer2026b}---gives Table~\ref{tab:band} and Figure~\ref{fig:band}.

\begin{table}[t]\centering\small
\caption{Exclusion band for a representative high type ($s=0.6\alpha$). British rows at the central multiple $m=3$; at the lower endpoint $m=1.2$ the British software, legal-research and radiology bands are empty and the audit band shrinks to EUR 18{,}411--21{,}447 (mass $0.054$).}
\label{tab:band}
\begin{tabular}{llrrr}
\toprule
Occupation & Jurisdiction & $v_{\min}$ (EUR) & $v_{\max}$ (EUR) & Ticket mass\\
\midrule
Software / web dev. & Germany & 2{,}221 & 14{,}173 & 0.721\\
                    & United Kingdom & 3{,}915 & 11{,}396 & 0.410\\
                    & United States (B2B) & \multicolumn{2}{c}{no band} & 0.000\\
Legal research & Germany & 1{,}710 & 11{,}151 & 0.588\\
               & United Kingdom & 3{,}289 & 9{,}572 & 0.410\\
               & United States (B2C) & 7{,}531 & 7{,}697 & 0.010\\
Radiology & Germany & 3{,}748 & 16{,}874 & 0.544\\
          & United Kingdom & 5{,}198 & 14{,}485 & 0.391\\
Audit & Germany & 2{,}424 & 23{,}917 & 0.792\\
      & United Kingdom & 4{,}002 & 21{,}447 & 0.672\\
      & United States (B2C) & 12{,}709 & 18{,}014 & 0.148\\
Social work & all & \multicolumn{2}{c}{no band} & 0.000\\
\bottomrule
\end{tabular}
\end{table}

The German software-development band, EUR 2{,}221--14{,}173, covers 72\% of the ticket distribution and brackets the observed range of freelance project values almost exactly---an out-of-sample consistency check we did not target. Its lower bound is close to the EUR 450--1{,}450 range derived independently in \citet{bauer2026a} under a two-type specification; the difference is attributable to the participation constraint, which that paper's lower bound omits.

Two comparative results deserve emphasis. First, the American rule excludes small claims \emph{more} severely than the German loser-pays rule, contrary to the standard intuition that fee shifting deters plaintiffs. The reason is structural: German statutory fee schedules are degressive and produce a low fixed block, whereas American litigation costs are large and unshifted, so the contingency participation constraint binds at a much higher claim value. Class aggregation reverses this in B2C but not in B2B, where arbitration clauses suppress it.

Second, the American band is empty for software development not because the ceiling is low at any given $v$, but because the ceiling is non-monotone in $v$. For small $v$ the penalty cap $m\kappa v$ binds and rises with $v$; once $v$ exceeds roughly $2.5$, individual solvency binds instead and stops rising, while the required exposure continues to grow linearly. The maximum certifiable type therefore peaks at an interior $v$ and declines thereafter. A freelance developer under the American doctrine cannot certify a high fallback type at \emph{any} transaction size. The same mechanism governed the British cell in the first version of this paper: at $m=1.2$ the interior peak of the certifiable type falls short of $s=0.6\alpha$ and the band is empty, while at the central multiple the peak clears it and a band of mass $0.410$ opens. Whether a British freelance developer can certify a high type is, on the current state of the penalty doctrine, genuinely indeterminate within the defensible range---an unusually sharp example of doctrinal uncertainty propagating into market structure.

\begin{figure}[t]\centering
\includegraphics[width=0.86\textwidth]{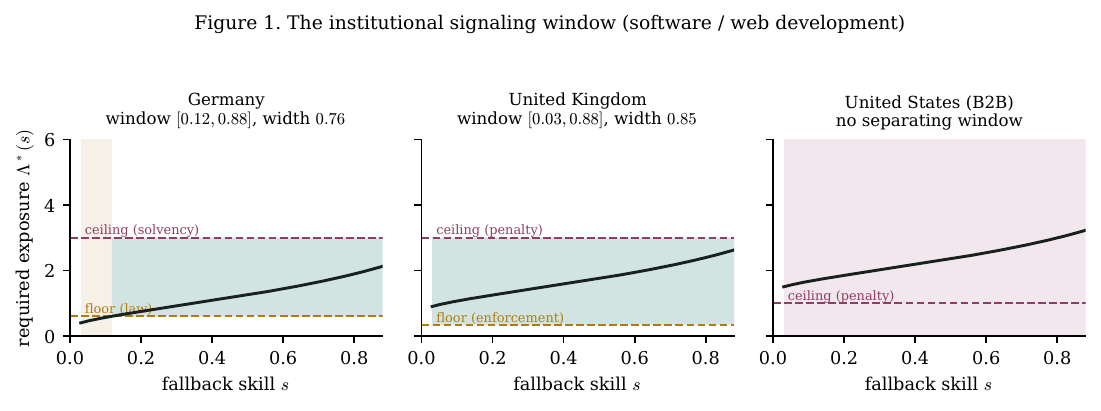}
\caption{The institutional signaling window for software development, at the central English multiple. The solid curve is the required retained exposure $\Lam^*(s)$; shaded regions are pooled. Germany is bounded below by mandatory law and above by solvency. In the United Kingdom the penalty cap $mv=3$ lies above the schedule over the whole type space, so the window runs to $\alpha$; at the lower endpoint of the English range ($m=1.2$, the first version's calibration) the cap cuts the schedule at $\sbar=0.49$. In the United States the enforcement floor lies above the penalty ceiling and no window exists at this transaction size.}
\label{fig:window}
\end{figure}

\begin{figure}[t]\centering
\includegraphics[width=0.68\textwidth]{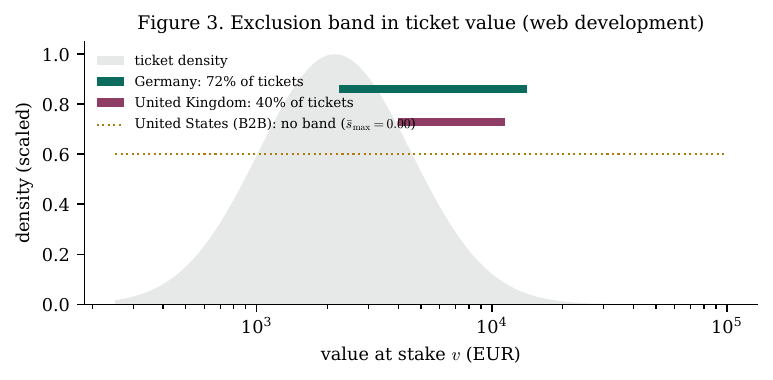}
\caption{Exclusion band in transaction value for a representative high type, against the ticket density.}
\label{fig:band}
\end{figure}

\section{Verifiability investment and the binding margin}\label{sec:theta}

Suppose the firm can raise verifiability from $\theta_0$ to $\theta_0+\tau$ at convex cost---by instrumenting its workflow with logging, immutable audit trails, explicit acceptance criteria, and forensic reconstruction capability. Raising $\theta$ lowers the required exposure at every type, since $\Lam^*$ is proportional to $1/\theta$. Its effect on the \emph{window} is not signed in general.

\begin{proposition}[Direction of the verifiability effect]\label{prop:theta}
Both boundaries of the window are nondecreasing in $\theta$, moving at rate
\[
\frac{\partial s}{\partial\theta}\bigg|_{\Lam^*(s)=\Lam}=\frac{E(s)}{\theta},
\qquad E(s)\equiv\frac{g(s)\,\bigl(1-\rho(s)\bigr)}{\rho'(s)},\quad
g(s)=\ln\frac{1-\rho(s_F)}{1-\rho(s)},
\]
for any $\theta$-invariant level $\Lam$. Consequently:
(i) if the ceiling is slack ($\sbar=\alpha$) the window weakly narrows in $\theta$, strictly whenever the floor binds above $c_0$ (so that $\sfl>s_F$);
(ii) if the ceiling binds in the interior, the window widens if and only if $E(\sbar)>E(\sfl)$, for which monotonicity of $E$ on $[\sfl,\sbar]$ is sufficient. Concavity of $\rho$ alone does not imply the condition; for the calibrated rescue family it holds in every non-empty cell of the diagnosis map (verified in Section~\ref{sec:robust}).
\end{proposition}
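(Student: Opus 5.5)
The plan is to work entirely with the closed form \eqref{eq:schedule} and treat each window boundary as the solution of an equation $\Lam^*(s)=\Lam$ at a fixed level $\Lam$. The first step is to check that the relevant levels do not move with $\theta$. In \eqref{eq:ceiling}--\eqref{eq:floor}, the ceiling $\overline{\Lam}$ and floor $\underline{\Lam}$ are built from $\psi$, $\bar L_{\text{solv}}$, $\bar L_{\text{law}}$, $m\kappa v$, $\underline{L}(v)$ and $\Lam_{\text{enf}}(J,v)$. None of these involves $\theta$: the viability condition of the cost rule is stated conditional on a proven claim, with $\rho_p$ and the fee schedule exogenous. So the boundary levels are $\theta$-invariant, which is exactly what the hypothesis of the statement needs. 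I would make this explicit, since it is the one place where the wedge could feed back into $\theta$.

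Next I rewrite the boundary condition. Using $\Lam^*(s)=c_0+\frac{\chi}{\theta\pi}g(s)$, the condition $\Lam^*(s)=\Lam$ becomes $g(s)=\theta\pi(\Lam-c_0)/\chi$. Differentiating implicitly, with $g'(s)=\rho'(s)/(1-\rho(s))$, gives
\[
g'(s)\,\frac{\partial s}{\partial\theta}=\frac{\pi(\Lam-c_0)}{\chi}=\frac{g(s)}{\theta},
\]
and hence $\partial s/\partial\theta = g(s)(1-\rho(s))/(\theta\rho'(s)) = E(s)/\theta$. For $s\geq s_F$ we have $g\geq0$ and $\rho'>0$, so $E\geq0$, and $E>0$ strictly for $s>s_F$.

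It remains to handle the corners. If $\underline{\Lam}\leq c_0$, then $\sfl=s_F$, which does not move. If the ceiling is slack, $\sbar=\alpha$, which does not move. Together these give weak monotonicity of both boundaries in $\theta$. I would invoke Proposition~\ref{prop:window} so that $\sfl$ and $\sbar$ are exactly these (possibly cornered) inverses, and note that a boundary moving from interior to corner as $\theta$ rises preserves weak monotonicity, by continuity of $(\Lam^*)^{-1}$ in $\theta$.

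Part (i) then follows directly. With $\sbar=\alpha$ fixed, the width changes at rate $-\partial\sfl/\partial\theta=-E(\sfl)/\theta\leq0$. The inequality is strict iff $\sfl>s_F$, i.e. iff the floor binds above $c_0$.

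For part (ii), both boundaries are interior, so
\[
\frac{\partial(\sbar-\sfl)}{\partial\theta}=\frac{E(\sbar)-E(\sfl)}{\theta},
\]
which gives the iff. Since $\sbar>\sfl$ on a non-degenerate window, $E$ increasing on $[\sfl,\sbar]$ is sufficient. The step I expect to be the main obstacle is the negative remark that concavity does not imply the condition. I would write $E=g\cdot q$ with $q=(1-\rho)/\rho'$, and compute
\[
q'=-1-\frac{(1-\rho)\rho''}{\rho'^2}.
\]
Concavity makes the second term positive, but $q'$ can still be negative, so $E'=g'q+gq'$ is unsigned in general. A counterexample can be built near $s_F$, where $g\approx0$ and $E$ is small, by pairing it with a region where $q$ falls fast enough. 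For the calibrated family $\rho=\rho_{\max}(s/\alpha)^a$ I would not attempt a proof, and would instead defer to the numerical check of $E$'s monotonicity over each non-empty cell, as the statement indicates (Section~\ref{sec:robust}).
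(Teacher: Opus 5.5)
Your derivation of the rate and of parts (i)--(ii) is the paper's own argument in Appendix A.5. Each boundary solves $\Lam^*(s;\theta)=\Lam$ at a $\theta$-invariant level, and implicit differentiation gives $E(s)/\theta$. Part (i) then follows from $\partial\sbar/\partial\theta=0$, and part (ii) from the difference of the two rates. Your explicit check that the levels in \eqref{eq:ceiling}--\eqref{eq:floor} do not involve $\theta$, and your corner bookkeeping, are if anything tidier than the appendix. In (ii) you need not assume the floor is interior: a cornered floor $\sfl=s_F$ does not move and $E(s_F)=0$, so the same formula covers it.

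The one step that would fail is your sketch of a counterexample for the clause that concavity does not imply $E(\sbar)>E(\sfl)$. With $q=(1-\rho)/\rho'$ you have $g'q\equiv 1$, so for concave $\rho$
\[
E'(s)=1+g(s)\,q'(s)=1-g(s)-g(s)\,\frac{\bigl(1-\rho(s)\bigr)\rho''(s)}{\rho'(s)^{2}}\;\geq\;1-g(s).
\]
Hence $E$ is strictly increasing wherever $g<1$, in particular on a neighbourhood of $s_F$, so no violation can be built there. Nor does the observation that $q'$ can be negative suffice; that observation is all the paper's own remark offers as well. $E$ falls only where $g\,q'<-1$.

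The violation must therefore sit where $g>1$, that is where $1-\rho(s)<(1-\rho(s_F))/e$, and the curvature must be small. One construction:
\begin{enumerate}
\item Take $\rho$ affine on $[s_F,\alpha]$ with $1-\rho(\alpha)<(1-\rho(s_F))/e$, so that $E'=1-g<0$ on the upper part of the type space.
\item Perturb slightly to make $\rho$ strictly concave; by continuity $E'<0$ persists on a subinterval.
\item Choose any $\sfl<\sbar$ in that region. Then $E(\sbar)<E(\sfl)$, so the window narrows in $\theta$ even though the ceiling is interior.
\end{enumerate}
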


\begin{proof}
Appendix~A.5.
\end{proof}

Figure~\ref{fig:theta} displays both cases. In German software development the ceiling is already slack---the window reaches $\alpha$---so raising $\theta$ from $0.80$ to $0.95$ moves only the floor, from $s=0.12$ to $s=0.14$, and the window narrows from $0.76$ to $0.74$. In British legal research the ceiling---solvency, at the central multiple---is interior, so raising $\theta$ from $0.50$ to $0.65$ lifts $\sbar$ from $0.39$ to $0.49$ and widens the window from $0.36$ to $0.46$. (In the first version the interior-ceiling case was British software development with the penalty cap binding at $m=1.2$; at the central multiple that window reaches $\alpha$ and $\theta$-investment has no signaling value there---itself an instance of the reversal.)

The managerial content is specific. A firm operating under a mandatory-minimum-liability regime whose technology already permits certification of the top of the type space gains nothing from further verifiability investment as a \emph{signaling} instrument---it should be justified, if at all, on loss-prevention grounds. The same firm operating under a penalty doctrine should treat verifiability investment as the primary instrument, because it is the only one that relaxes the binding constraint. Since multinational professional-service firms typically standardize quality-assurance infrastructure globally, this predicts systematic over-investment in some jurisdictions and under-investment in others relative to the signaling optimum.

\begin{figure}[t]\centering
\includegraphics[width=0.86\textwidth]{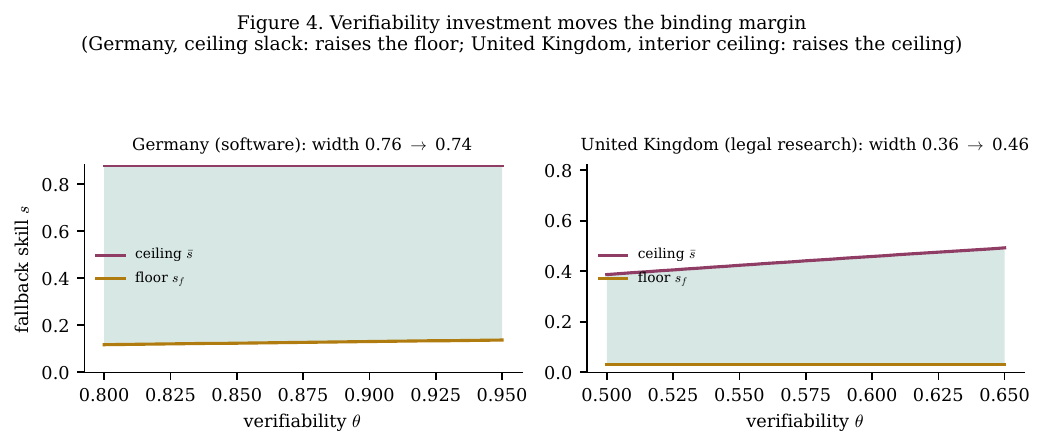}
\caption{Verifiability investment moves the binding margin. In German software development the ceiling is slack, so raising $\theta$ moves only the floor and the window narrows; in British legal research the ceiling is interior---solvency, at the central multiple---so the same investment widens the window.}
\label{fig:theta}
\end{figure}

\section{Dynamics: engagement and the fallback stock}\label{sec:dynamics}

\subsection{The reliability paradox}

From \eqref{eq:skill}, skill is non-decreasing iff $\varphi\pi h\geq\gamma(1-h)$, giving a minimum engagement share
\begin{equation}\label{eq:hmin}
h_{\min}\;=\;\frac{\gamma}{\varphi\pi+\gamma}.
\end{equation}

\begin{proposition}[Reliability paradox]\label{prop:paradox}
$\partial h_{\min}/\partial\pi<0$: as the AI becomes more reliable, the share of cases that must be worked by humans to hold fallback skill constant \emph{increases}, and $h_{\min}\to1$ as $\pi\to0$.
\end{proposition}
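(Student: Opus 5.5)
The plan is to work directly from the skill transition \eqref{eq:skill} and the closed form \eqref{eq:hmin}. The first step is to justify \eqref{eq:hmin} as the exact threshold. For $s_t<\alpha$ and $\pi<1$ the prefactor $(\alpha-s_t)(1-\pi)$ is strictly positive, so the sign of $s_{t+1}-s_t$ equals the sign of $\varphi\pi h-\gamma(1-h)=h(\varphi\pi+\gamma)-\gamma$. This expression is affine and strictly increasing in $h$ whenever $\varphi\pi+\gamma>0$. Hence skill is non-decreasing if and only if $h\geq\gamma/(\varphi\pi+\gamma)$, and this threshold does not depend on $s_t$. This $s$-independence matters, because it is what makes ``the share required to hold skill constant'' a single well-defined number rather than a state-dependent policy.

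The second step is the comparative static. Treat $\varphi>0$ and $\gamma>0$ as fixed primitives and differentiate:
\[
\frac{\partial h_{\min}}{\partial\pi}=-\frac{\gamma\varphi}{(\varphi\pi+\gamma)^2}<0 .
\]
The limit follows from continuity: at $\pi=0$ we get $h_{\min}=\gamma/\gamma=1$. The intuition is that when the AI never fails, no engagement produces learning, so only full engagement stops the erosion.

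The main obstacle is not computational but one of scope. First, I would state explicitly that $\varphi,\gamma>0$. If $\gamma=0$, then $h_{\min}\equiv0$ and the derivative vanishes, so the strict inequality needs $\gamma\varphi>0$. Second, I would note that $\pi$ enters \eqref{eq:skill} a second time, through the factor $(1-\pi)$. This factor scales the speed of adjustment but not its sign, so it does not affect the threshold. Third, at the boundary $\pi=0$ the learning term vanishes identically, so the limit statement should be read as the limit of the threshold as $\pi\downarrow0$.

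Finally, I would remark that these are exactly the ``learning only from failures'' assumptions of \citet{singh2026}, which the paper uses unchanged. The paradox is therefore a direct consequence of that specification rather than a new economic assumption.
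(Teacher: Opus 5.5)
Your proposal is correct and follows the paper's route: the paper obtains $h_{\min}=\gamma/(\varphi\pi+\gamma)$ from the sign of the drift in \eqref{eq:skill} and calls the proposition immediate from that closed form, which is exactly your derivative-and-limit argument. Your scope conditions ($\varphi,\gamma>0$, $\pi<1$, $s_t<\alpha$, and reading the limit as $\pi\downarrow0$) state explicitly what the paper leaves implicit, without changing the argument.
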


The proposition is immediate from \eqref{eq:hmin} but its content is not. Failures are the learning events; erosion runs on calendar time regardless. A more reliable AI supplies fewer learning events per unit of erosion, so the firm must manufacture engagement that the workflow no longer generates naturally. Our calibration yields $h_{\min}=0.75$ for software development and audit, $0.76$ for legal research, $0.81$ for social work, and $0.92$ for radiology (Figure~\ref{fig:paradox}). These levels are not delicate: halving or increasing $\gamma$ by half leaves $h_{\min}$ in $[0.60,0.82]$ for software development and in $[0.85,0.94]$ for radiology, so the qualitative claim---that maintaining unassisted radiological reading requires reserving the large majority of cases---survives the contested calibration of the erosion rate discussed in Section~\ref{sec:calib}. Radiology is the extreme case precisely because its AI is good: a department that wishes to preserve its radiologists' unassisted reading capability must have them read more than nine cases in ten without decisive machine input. This is an operational requirement, not an aspiration, and it is in direct tension with the throughput gains that motivate adoption.

\begin{figure}[t]\centering
\includegraphics[width=0.52\textwidth]{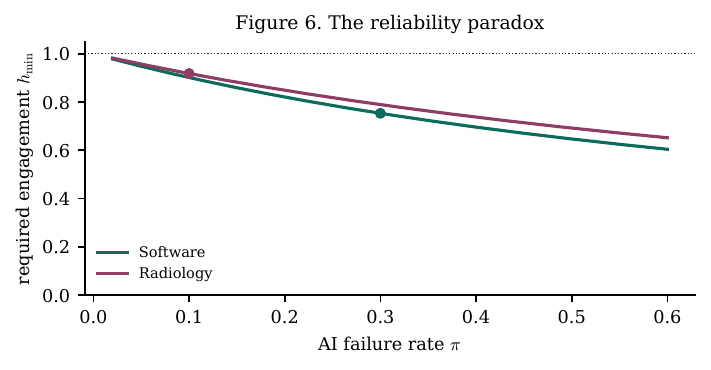}
\caption{The reliability paradox: required engagement $h_{\min}=\gamma/(\varphi\pi+\gamma)$ rises as the AI failure rate falls. Markers show the calibrated points.}
\label{fig:paradox}
\end{figure}

\begin{remark}[Learning from verification]\label{rem:verify}
Equation~\eqref{eq:skill} treats failures as the only learning events, following \citet{singh2026}. Suppose instead that actively checking a \emph{correct} AI output also builds skill, at rate $\varphi_v=\kappa\varphi$ with $\kappa\in[0,1]$, so that the drift becomes $[\varphi\pi+\varphi_v(1-\pi)]h-\gamma(1-h)$ and
\[
h_{\min}(\kappa)=\frac{\gamma}{\varphi\bigl(\kappa+\pi(1-\kappa)\bigr)+\gamma},
\qquad
\frac{\partial h_{\min}}{\partial\pi}\ \text{has the sign of}\ -(1-\kappa).
\]
The paradox therefore survives for \emph{every} $\kappa<1$ and disappears only in the knife-edge case where checking a correct output teaches exactly as much as diagnosing a failure. What changes is the level and, past a point, the ranking: at $\kappa=0.5$ the radiology requirement falls from $0.92$ to $0.67$, and radiology ceases to be the extreme case at $\kappa=0.39$, beyond which social work---whose low $\varphi$ limits the gain from verification---requires the most engagement. Evidence on automation bias places $\kappa$ near the bottom of its range rather than near one: \citet{dratsch2023} find that verification performed \emph{under} an AI suggestion degrades judgement, correct ratings among inexperienced readers falling from 79.7\% to 19.8\%. We report $\kappa=0$ as the baseline and treat $\kappa$ as the single most valuable parameter to estimate directly.
\end{remark}

\subsection{Agent-based market}

This section is a computational illustration of the model's dynamic implications under a specified policy class---validated, in the manner usual for agent-based models, by its behaviour rather than solved for equilibrium \citep{windrum2007,fagiolo2019}. We embed the wedge in the agent-based market of \citet[Sec.~8]{bauer2026b}, extended by the institutional primitives. Two providers---a Builder that chooses engagement from a budget grid by six-period forward projection, and a Free-Rider that sets $h=0$ and poaches the Builder's strongest workers---compete for mobile workers and for clients. Beliefs are not imposed: clients observe a censored public record of verified rescues and failures, verification succeeding with probability $\theta$, and blend the empirical posterior with the cap-implied prior with weight $n/(n+8)$. Full specification is in Appendix~B; the simulation is deterministic given the seed.

\begin{table}[t]\centering\small
\caption{Agent-based market outcomes (mean of the final quarter, $T=96$)}
\label{tab:abm}
\begin{tabular}{llrrrrr}
\toprule
Occupation & Jurisdiction & $\bar s$ Builder & $\bar s$ Free-Rider & $h$ & Posting & Builder share\\
\midrule
Software / web dev. & Germany & 0.554 & 0.259 & 1.00 & 1.00 & 0.62\\
                    & Switzerland & 0.554 & 0.259 & 1.00 & 1.00 & 0.85\\
                    & United Kingdom & 0.554 & 0.259 & 1.00 & 1.00 & 0.88\\
                    & UK (NHS channel) & 0.000 & 0.000 & 0.00 & 0.00 & 0.50\\
                    & United States (B2B) & 0.000 & 0.000 & 0.00 & 0.00 & 0.50\\
Legal research & Germany & 0.417 & 0.182 & 1.00 & 1.00 & 0.81\\
               & United Kingdom & 0.377 & 0.141 & 0.96 & 1.00 & 0.81\\
               & United States (B2B) & 0.000 & 0.000 & 0.00 & 0.00 & 0.50\\
Radiology & Germany & 0.250 & 0.040 & 1.00 & 1.00 & 0.56\\
          & United Kingdom & 0.250 & 0.040 & 1.00 & 1.00 & 0.81\\
          & UK (NHS channel) & 0.000 & 0.000 & 0.00 & 0.00 & 0.49\\
Audit & Germany & 0.419 & 0.187 & 1.00 & 1.00 & 0.68\\
Social work & all & 0.000 & 0.000 & 0.00 & 0.00 & ---\\
\bottomrule
\end{tabular}
\end{table}

Table~\ref{tab:abm} and Figure~\ref{fig:abm} report the central dynamic result.

\textbf{Every cell with an empty institutional window converges to zero engagement and complete fallback-skill collapse.} This is not an artifact of the instrument being unavailable; it is the interaction of two mechanisms. Without a window, the Builder's forward projection assigns no value to a certifiable type, so it sets $h=0$; with $h=0$ and $\gamma>\varphi\pi$ in every calibrated occupation, the zero-engagement path is absorbing, and skill collapses to the fringe within a few periods. The institutional regime does not merely determine whether quality is \emph{communicated}: within the tested policy class, the window separates sustained engagement from collapse in every configuration, so that in the simulated market it determines whether the fallback capability is \emph{produced} at all. The dichotomy is not an artifact of the policy constants: varying the planning horizon (3--12 periods), the poaching rate (0.02--0.12) and the record weight $n_0$ (4--16) jointly over 27 configurations leaves all 27 open-window runs sustained (Builder terminal skill between 0.32 and 0.76) and all 54 empty-window runs collapsed (Appendix~B).

Two secondary findings are worth reporting. First, the Builder beats the Free-Rider on market share in every cell with an open window, and the gap is largest where the window is widest---software development, where the window covers the full type space (Switzerland $0.85$, United Kingdom $0.88$ at the central multiple). Second, the reputation-substitution result of \citet{bauer2026b}---a Builder that sustains full engagement while posting \emph{no} cap, because the public record of verified rescues separates it before certification becomes worthwhile---replicates under the extended model, but only at the lower end of the English range: at $m=1.2$, the first version's calibration, the British software Builder sustains a fallback skill of $0.554$ with a posting rate of zero, while at the central multiple the wider window makes certification worthwhile and the same Builder posts throughout (posting rate $1.00$, share $0.88$). The substitution of reputation for certification is thus a narrow-window phenomenon: certification responds to the informational state of the market, and where the doctrine widens the window, the pledge crowds the record back in. Within the model this yields a testable comparative static across the English range itself---posting propensity among otherwise identical providers should rise with the enforceable multiple.

\begin{figure}[t]\centering
\includegraphics[width=0.86\textwidth]{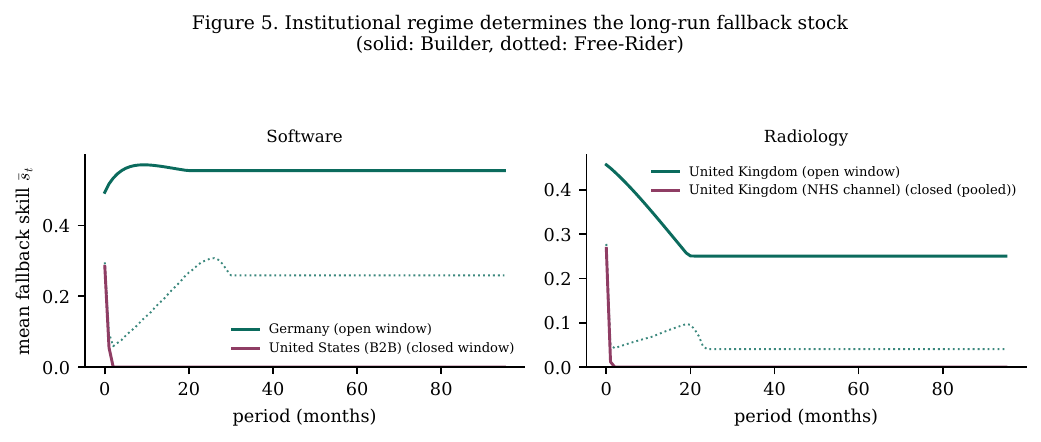}
\caption{Simulated fallback stock under the tested policy class. Solid: Builder; dotted: Free-Rider. Where the window is empty---the US B2B cell and the NHS channel---engagement is zero and skill collapses to the fringe.}
\label{fig:abm}
\end{figure}

\section{Robustness}\label{sec:robust}

Point values in Table~\ref{tab:heatmap} are calibration-dependent; the claims we make are directional. We therefore test the four substantive claims directly under 88 one-at-a-time perturbations, scaling each occupation primitive by $0.5$ and $1.5$ (Table~\ref{tab:robust}).

\begin{table}[h]\centering\small
\caption{Robustness of the substantive claims (88 one-at-a-time perturbations). C2b is stated at the central English multiple; its first-version form---penalty-bound throughout---is the $m=1.2$ special case.}
\label{tab:robust}
\begin{tabular}{lc}
\toprule
Claim & Holds\\
\midrule
C1: the widest window is in a civil-law jurisdiction & 94.3\%\\
C2a: the binding lower margin in Germany is mandatory law & 98.9\%\\
C2b: the UK upper margin is solvency, except software (penalty) & 95.5\%\\
C3: the pooled-indemnity channel admits no separating window & 100.0\%\\
\bottomrule
\end{tabular}
\end{table}

The failures are informative rather than incidental. C1 now fails through a channel the corrected multiple opens: halving $\rho_{\max}$ or $\zeta$ in radiology or audit (and raising audit solvency) shrinks the civil-law windows to the point where the British solvency-plateau window overtakes them---the United Kingdom is, at the central multiple, close enough to the civil-law vectors that modest occupation shocks can reorder them. C2b fails where scaling the ticket moves the penalty cap, which is proportional to $v$, across the fixed solvency ceiling: scaling $v$ up in software lifts the cap above solvency (penalty expected, solvency found), scaling $v$ down in legal research or audit drops it below (solvency expected, penalty found), and halving software solvency has the same effect from the other side. Every failure is a boundary switch at exactly the margin the theory identifies, which is the behaviour one wants from a comparative-institutional model. We also perturb the jurisdictional cost parameter: halving the German fixed enforcement block leaves the software window at $0.763$, while raising it by half \emph{widens} it to $1.000$---because a higher $c_0$ raises the boundary condition $\Lam^*(s_F)=c_0$ and lifts the low types over the mandatory floor. This is a further instance of Corollary~\ref{cor:floor} and a caution against reading enforcement cost as monotonically harmful.

\paragraph{Jurisdictional parameters.} The perturbations above vary the occupation vectors; the paper's own contribution is the jurisdiction vectors, so we test those directly. Forty one-at-a-time perturbations of the legal primitives---$c_0$, $c_1$, $a_c$ and the small-claims threshold at $\times0.5/\times1.5$, the penalty multiple $m$ at $\times0.8/\times1.2$ (that is, $2.4$ and $3.6$ around the central English multiple), the mandatory fraction $f$ at $\times0.5/\times1.5$, and the displacement parameters $\psi$ at $\times0.9/\times1.05$ (clipped to $[0,1]$)---evaluated across the four market occupations give 160 claim-cells (Table~\ref{tab:jurrobust}). The failures are again boundary switches the theory predicts: halving the German $f$ drops the mandatory floor below the enforcement floor, so the binding lower margin switches to enforcement cost; raising the British $m$ to $3.6$ lifts the penalty cap above software solvency, so the last penalty-bound British cell switches to solvency; and halving the British enforcement block $c_0$ makes the British window the widest in legal research, radiology and audit---at the central multiple the United Kingdom trails the civil-law vectors mainly through its enforcement cost, so removing that handicap reorders the top.

\begin{table}[h]\centering\small
\caption{Robustness to jurisdictional perturbations (40 perturbations of the legal primitives $\times$ 4 occupations)}
\label{tab:jurrobust}
\begin{tabular}{lc}
\toprule
Claim & Holds\\
\midrule
C1: the widest window is in a civil-law jurisdiction & 97.5\%\\
C2a: the binding lower margin in Germany is mandatory law & 99.4\%\\
C2b: the UK upper margin is solvency, except software (penalty) & 99.4\%\\
C3: the pooled-indemnity channel admits no separating window & 100.0\%\\
\bottomrule
\end{tabular}
\end{table}

\paragraph{Parameter sweeps.} Three full sweeps trace the comparative statics that carry the headline results. \emph{Mandatory minimum (Corollary~\ref{cor:floor}):} sweeping $f\in[0,1]$ in Germany, the window width is monotonically nonincreasing in every market occupation; at the calibrated $f=0.6$ the width loss relative to the $f=0$ width ranges from 10\% in software development to 27\% in legal research, 32\% in audit and 35\% in radiology, and the window never closes entirely below $f=1$ because the solvency ceiling remains high. The floor begins to bind at the closed-form threshold $f_{\text{bind}}=\max\{c_0,\underline{\Lam}_{\text{enf}}\}/\bigl((1-\psi)\bar v\bigr)$---$0.40$ in the software cell, $0.16$ in legal research and $0.13$ in radiology and audit---so the German benchmark $f=0.60$ lies well inside the binding region rather than at its edge. \emph{Penalty multiple (Proposition~\ref{prop:penalty} and Remark~\ref{rem:penalty}):} sweeping $m\in[1,3]$ in the United Kingdom, the width is monotonically nondecreasing, rising for software development from $0.045$ at $m=1$ to $0.201$ at the deterrence-adequate multiple $1/\theta_0=1.25$ and to the full type space by $m=3$; for legal research and radiology the gain saturates once solvency overtakes the penalty cap. The corrected calibration places the British point value at the top of this sweep and the first version's at its lower reaches, so the sweep doubles as the band sensitivity of every British result. \emph{Displacement (Proposition~\ref{prop:sc}):} sweeping $\psi\in[0,1]$ under the British cost regime at the central multiple, the width is monotonically nonincreasing with closure thresholds $\psi^{\dagger}$ of $0.70$ (software development) and $0.77$--$0.82$ (the other market occupations). The calibrated NHS pooling level $\psi=0.90$ exceeds every threshold---the pooling result holds with margin, not marginally, and the correction of the English multiple widens that margin: at the first version's $m=1.2$ the software threshold was $0.24$ and the result there was closer-run.

\paragraph{Verification of Proposition~\ref{prop:theta}.} For each of the sixteen non-empty cells among the four market occupations and the four European jurisdictions, the observed direction of the width in $\theta$ matches the proposition: the four ceiling-slack cells---software development in all four jurisdictions, the British cell having joined at the central multiple---narrow or are flat (flat exactly where the floor lies below $c_0$, as case (i) requires), and all twelve interior-ceiling cells satisfy $E(\sbar)>E(\sfl)$ and widen.

\paragraph{Global sensitivity.} One-at-a-time variation cannot detect interactions among the legal primitives, so we add a variance-based global analysis in the spirit of \citet{harenberg2019}: Sobol' first-order and total indices \citep{saltelli2008} of the window share over the six-dimensional institutional space $(f,m,\psi,\theta_0,\bar L_{\text{solv}},c_0)$, Saltelli sampling with $N=512$ base points (4{,}096 evaluations). The evaluation cell is software development under the loser-pays cost technology, with ranges spanning the calibrated vectors: $f,\psi\in[0,1]$, $m\in[1,4]$---the American benchmark to the top of the defensible English range, the civil-law $m=\infty$ lying outside the sampled space---$\theta_0\in[0.25,0.85]$, $\bar L_{\text{solv}}\in[0.5,15]$, $c_0\in[0.38,1.5]$; the construction is released with the replication code. A window exists in 53.5\% of draws. Displacement dominates the variance ($S_1=0.42$, $S_T=0.66$), followed by the penalty multiple ($S_1=0.20$, $S_T=0.37$) and enforcement cost ($S_1=0.07$); rank correlations carry the predicted signs throughout ($\psi$: $-0.70$; $c_0$: $-0.29$; $m$: $+0.38$). The mandatory fraction $f$ has a \emph{global} index of zero to Monte-Carlo precision---correctly so: the floor binds only where displacement and enforcement cost are low, which is precisely the German configuration, so Corollary~\ref{cor:floor} is a configuration-specific mechanism rather than a global driver, consistent with how we now state the claims. A correlated variant (Gaussian copula, $\rho_{f,m}=0.6$, $\rho_{f,c_0}=-0.4$, representing the civil-law bundle; 2{,}048 draws) leaves the coverage fraction (53.8\%) and the $\psi$, $c_0$, $m$ signs unchanged; the marginal rank correlation of $f$ flips positive ($+0.35$) under the bundle, an expected artifact of correlated sampling that measures the bundle rather than the parameter.

\section{Implications}\label{sec:implications}

\subsection{For firms}

\textbf{Locate the binding margin before investing.} The single most consequential managerial fact in this paper is that the same verifiability investment has opposite signs in Germany and the United Kingdom (Proposition~\ref{prop:theta}). Firms should audit which of the four institutional constraints binds in each market they serve, and target the corresponding instrument: contractual gross-up where the penalty doctrine permits it, verifiability infrastructure where it does not, and neither where the cap floor has already compressed the window from below.

\textbf{Treat the engagement share as a capacity decision.} Equation~\eqref{eq:hmin} converts capability maintenance into a schedulable quantity. A radiology department targeting a stable unassisted-reading capability must reserve roughly nine cases in ten for human-first reading; a software organization roughly three in four. These reservations should enter capacity planning as a constraint, alongside utilization targets, rather than being left to the discretion of individual professionals whose short-run incentives run the other way.

\textbf{Do not certify before the record does.} The simulated Builder wins market share long before it posts a cap, which implies that certification expenditure is wasted where a public performance record already separates. The corollary is that firms should invest first in making their record legible and verifiable, and only then in contractual exposure.

\subsection{For regulators}

\textbf{The cap floor.} The result is a comparative static, not a verdict: the window width is nonincreasing in the mandatory fraction $f$, monotonically along the entire sweep, with losses at the German level $f=0.6$ of 10--35\% of the type space depending on occupation (Section~\ref{sec:robust}). The model also locates the margin at which the trade-off begins: a floor set below the least-cost exposure of the lowest type the market wishes to distinguish leaves the width unchanged; above it, protection is purchased at the price of pooling the lower type range. Whether that price is worth paying is a welfare question outside the model, since the protective benefit of the floor is not priced in it.

\textbf{Pooled indemnity.} Displacement and pooling enter through the same parameter, and the width is monotonically decreasing in $\psi$ with cell-specific closure thresholds $\psi^{\dagger}$ of $0.24$--$0.74$ (Section~\ref{sec:robust}). The calibrated NHS level, $\psi=0.90$, exceeds every threshold, which is why that channel admits no window for any occupation despite the absence of immunity. The retained-risk instruments of \citet[Prop.~7]{bauer2026a}---a deductible or an experience-rated premium---act within the model as reductions of $\psi$, re-opening the window once $\psi<\psi^{\dagger}$. The model thereby locates the design margin; it does not rank pooling against the risk-spreading benefits that motivate it, which are outside the model.

\textbf{Publicly financed human services.} At individual scale no monetary-exposure equilibrium exists for social work in any calibrated jurisdiction, and Table~\ref{tab:scale} identifies the margin that would have to move: within the model, the window opens only when the pledging balance sheet is roughly thirty times an individual professional's, and closes again under official liability even then. The instruments that remain type-dependent at individual scale are the non-monetary ones the model does not price---criminal guarantor duties, professional-regulatory sanction, directly procured engagement shares. We read this as directing the model's extension rather than as a policy conclusion.

\subsection{Testable predictions}

The framework generates predictions that do not follow from technology-exposure accounts:

\begin{enumerate}[leftmargin=*,itemsep=1pt]
\item Liability multiples in AI-assisted professional service contracts should be systematically higher in civil-law than in common-law jurisdictions, controlling for occupation and transaction size (Proposition~\ref{prop:penalty}).
\item Within the United Kingdom, liability-based quality signaling should be present in private healthcare and absent in the NHS at identical clinical activity---a quasi-experiment on pooling with the immunity channel held fixed; case mix, procurement and regulatory oversight also differ across these channels, so the comparison identifies the pooling channel only under controls for clinical activity.
\item Investment in audit-trail and logging infrastructure should be higher, conditional on occupation, in the United States than in civil-law jurisdictions (Proposition~\ref{prop:theta}); for the United Kingdom the same prediction holds only toward the lower end of the defensible penalty range, so the Anglo-American contrast is itself informative about where within that range English practice sits.
\item Deskilling should appear first and fastest at both ends of the transaction-size distribution and slowest in the mid-market, rather than ordered by automatability \citep[extending][]{bauer2026b}.
\item Measured unassisted performance should deteriorate fastest in occupations with the \emph{lowest} AI failure rates (Proposition~\ref{prop:paradox}), the opposite of a naive exposure ranking.
\item In publicly financed human services, liability-based quality commitments should be observed only at organisational scale and only among contracted independent providers, never among public-authority staff performing identical work.
\end{enumerate}

\section{Limitations and conclusion}

Four limitations bound the claims. \emph{Calibration is not estimation.} We place parameters ordinally against published evidence; the regime classifications are robust to substantial perturbation, but the point values in Table~\ref{tab:heatmap} should not be read as measurements. \emph{The jurisdiction vectors are stylized types.} The United States is modelled as a point when it is a distribution over fifty state regimes; the treatment of the penalty doctrine compresses a substantial body of case law into a single multiple. \emph{The rescue technology is assumed, not estimated.} As \citet{bauer2026b} notes, estimating $\rho(\cdot)$ from incident data---aviation, radiology, clinical decision support---remains the single most valuable empirical extension, and the concavity of $\rho$ drives the shape of the window. \emph{The AI frontier is exogenous.} We hold $\alpha$ and $\pi$ fixed, whereas human correction of AI failures feeds back into model improvement. The omission is conservative with respect to our own conclusion: if $\pi$ falls as engagement rises, then by Proposition~\ref{prop:paradox} the engagement requirement $h_{\min}$ rises with it, so a model with that feedback would make capability maintenance harder, not easier. \emph{Non-monetary commitments are outside the model.} Criminal guarantor liability and professional-regulatory sanction are type-dependent costs and therefore belong to the instrument class Proposition~\ref{prop:sc} identifies; pricing them would likely reopen part of the social-work cell.

What the paper establishes is narrower than the mechanism it inherits but, we think, more useful. Liability can certify preserved human capability only inside a window whose boundaries are set by law rather than by technology: solvency and the penalty doctrine above, the mandatory cap floor and enforcement cost below, with state liability and pooled indemnity able to close the window entirely. Outside that window, within the tested policy class every configuration we simulate drives engagement to zero, and the fallback capability the signal was supposed to certify does not merely go uncommunicated---it ceases to exist. Liability institutions are, in this sense, an instrument of workforce capability policy, and they are currently being set by bodies that are not aware of holding it.

\clearpage
\setcounter{page}{1}
\renewcommand{\thepage}{EC.\arabic{page}}
\begin{center}{\Large\bfseries Electronic Companion}\end{center}
\appendix
\section{Proofs}

\subsection*{A.1 Proof of Proposition~\ref{prop:sc}}
From \eqref{eq:cost}, $C(s,\Lam)=\theta\pi(1-\rho(s))\Lam$. Then $\partial C/\partial\Lam=\theta\pi(1-\rho(s))>0$ since $\rho(s)<1$ for $s<\alpha$, and $\partial^2C/\partial s\partial\Lam=-\theta\pi\rho'(s)<0$ since $\rho'>0$ by \eqref{eq:rho}. Strict monotonicity of $\Lam$ in $L$ on an interval $I$ implies $\partial^2C/\partial s\partial L=(\partial^2C/\partial s\partial\Lam)(\partial\Lam/\partial L)<0$ on $I$, which is the Spence--Mirrlees condition; existence of a least-cost separating equilibrium on $I$ then follows by the standard argument \citep{riley1979}, as in \citet[Prop.~2]{bauer2026b}, whose client stage, posterior formation and off-path beliefs satisfying the intuitive criterion are inherited here rather than re-derived. The kinks of the wedge lie outside the strictly increasing segments by construction; implementability across them is the content of Lemma~\ref{lem:impl}. For the converse, if $\Lam$ is constant on the feasible range then $C$ is independent of $L$ there, so all types face identical costs of every message. In any PBE, a message profile in which types are distinguished would require some type to strictly prefer deviating to another's message whenever beliefs differ; with identical costs and identical benefits from a given belief, no such separating profile is incentive compatible, and every equilibrium pools. \qed

\subsection*{A.2 Proof of Proposition~\ref{prop:window}}
$\Lam^*$ in \eqref{eq:schedule} is continuous and strictly increasing on $[s_F,\alpha)$, since $d\Lam^*/ds=\chi\rho'(s)/[\theta\pi(1-\rho(s))]>0$, and $\Lam^*(s)\to\infty$ as $s\to\alpha$ because $\rho(s)\to\rho_{\max}$ and the integrand diverges. Hence $\Lam^*$ is a bijection from $[s_F,\alpha)$ onto $[c_0,\infty)$ and its inverse is well defined on that range. A type $s$ separates iff its required exposure is both attainable, $\Lam^*(s)\leq\overline{\Lam}$, and permitted to be as small as required, $\Lam^*(s)\geq\underline{\Lam}$: if $\Lam^*(s)<\underline{\Lam}$ the provider must post $\underline{\Lam}$, which is also the cheapest admissible message for every type below $s$, so all such types send it and pool. Applying $(\Lam^*)^{-1}$, which is increasing, to the two inequalities gives $[\sfl,\sbar]$. Non-emptiness is immediate. \qed

\subsection*{A.3 Proof of Proposition~\ref{prop:penalty}}
\citet[Prop.~6]{bauer2026a} establishes that the client's expected recovery on a failure is $\theta\Lam$, so deterrence and separation at value $v$ require $\theta\Lam\geq v$, i.e.\ $\Lam\geq v/\theta$. The penalty doctrine renders unenforceable any agreed sum exceeding $m$ times compensatory loss, so $\Lam\leq mv$ by \eqref{eq:wedge}. An exposure level satisfying both exists iff $mv\geq v/\theta$, i.e.\ $m\geq1/\theta$. If the condition fails, $\theta$ can be raised to $\theta'\geq1/m$ at cost, restoring feasibility, or the firm can use an instrument whose cost is type-dependent but which is not a liquidated-damages clause---capacity-contingent audit fees or actuarially rated cover---which lies in the admissible class by Proposition~\ref{prop:sc}. \qed

\subsection*{A.4 Proof of Proposition~\ref{prop:payer}}
$\Lam^*(s;\omega)=c_0+(\omega\zeta v\pi/\theta\pi)\ln[(1-\rho(s_F))/(1-\rho(s))]$ is affine and strictly increasing in $\omega$ for every $s>s_F$, while $\overline{\Lam}$ and $\underline{\Lam}$ in \eqref{eq:ceiling}--\eqref{eq:floor} are independent of $\omega$. Hence $\sbar(\omega)$ is strictly decreasing and $\sfl(\omega)$ strictly decreasing in $\omega$, and the width $\sbar-\sfl$ is continuous. At $\omega\to0$, $\Lam^*\to c_0$ uniformly, so either $c_0<\underline{\Lam}$ and the window is empty, or $c_0\in[\underline{\Lam},\overline{\Lam}]$ and only a degenerate set of types separates. Continuity and monotonicity give a threshold $\omega^*$. The comparative statics follow from differentiating the boundaries with respect to $c_0$, $\underline{L}$, $\bar L_{\text{solv}}$ and $\theta$. \qed

\subsection*{A.5 Proof of Proposition~\ref{prop:theta}}
Write $\Lam^*(s;\theta)=c_0+(\chi/\theta\pi)g(s)$ with $g(s)=\ln[(1-\rho(s_F))/(1-\rho(s))]$, $g(s_F)=0$, $g'>0$. Both boundaries solve $\Lam^*(s;\theta)=\Lam$ for a $\theta$-invariant level $\Lam$ (the mandatory or enforcement floor below, the solvency, statutory or penalty ceiling above). Total differentiation gives
\[
\frac{\partial s}{\partial\theta}
=-\frac{\partial\Lam^*/\partial\theta}{\partial\Lam^*/\partial s}
=\frac{(\chi/\theta^2\pi)\,g(s)}{(\chi/\theta\pi)\,\rho'(s)/(1-\rho(s))}
=\frac{g(s)(1-\rho(s))}{\theta\,\rho'(s)}=\frac{E(s)}{\theta}\;\geq\;0 .
\]
(i) Ceiling slack: $\partial\sbar/\partial\theta=0$ while $\partial\sfl/\partial\theta=E(\sfl)/\theta$, which is strictly positive iff $\sfl>s_F$, i.e.\ iff the floor level exceeds $c_0$ so that the boundary is interior; the width therefore weakly narrows, strictly in that case. (ii) Interior ceiling: $\partial(\sbar-\sfl)/\partial\theta=[E(\sbar)-E(\sfl)]/\theta$, whose sign is that of $E(\sbar)-E(\sfl)$; monotone $E$ on the window is sufficient. $E$ combines the increasing factor $g$ with the ratio $(1-\rho)/\rho'$, which concavity of $\rho$ does not sign, so the condition is verified numerically for the calibrated family rather than asserted: it holds in all sixteen non-empty occupation--jurisdiction cells (Section~\ref{sec:robust}). \qed

\subsection*{A.6 Proof of Lemma~\ref{lem:impl}}
On any segment where the penalty cap is slack and enforcement viable, $\Lam(K)=(1-\psi)K$ is strictly increasing, so $\mathcal K(\Lam)=\{\Lam/(1-\psi)\}$ is a singleton and trivially possesses a least element. At the capped value $\Lam=(1-\psi)m\kappa v$, every $K\geq m\kappa v$ delivers the same exposure, so $\mathcal K=[m\kappa v,\infty)$, an interval with least element $m\kappa v$. For $\Lam\in(0,\underline{\Lam}_{\text{enf}})$ enforcement is non-viable at every implementing cap, so $\mathcal K=\emptyset$; $\Lam=0$ is implemented by any cap in the non-viable region. Non-emptiness on $\{0\}\cup[\underline{\Lam}_{\text{enf}},\overline{\Lam}]$ and the interval property follow; since the least-cost schedule takes values in that set on the window by Proposition~\ref{prop:window}, implementing every prescribed exposure by the minimal cap is feasible; since cost depends on the exposure alone, all implementing caps are cost-equivalent and the minimal cap is a selection without economic content. Statements in exposure space are therefore without loss. \qed

\section{Agent-based model}
The simulation extends \citet[Sec.~8]{bauer2026b}. Horizon $T=96$ periods (months); two providers, each with 60 workers initialized at $s_0=0.45$; 220 clients per period; seed 20260806. All results in Table~\ref{tab:abm} are means over the final quarter.

\paragraph{Provider policy.} The Builder selects $h$ from the grid $\{0,0.25,0.5,0.75,1\}$ by six-period forward projection of \eqref{eq:skill} under constant $h$, valuing the certifiable type $\min\{s_{t+k},\sbar\}$ at $v\pi[\rho(\cdot)-\rho(s_F)]$ net of the output drag $\delta h(\alpha-s)$, discounted at $\beta=0.90$, less the fixed posting cost $k_L$. The Free-Rider sets $h=0$. Both post the minimal implementing cap for $\Lam^*(\hat s)$ at the truthful claim $\hat s=\text{clip}(\bar s_t,\sfl,\sbar)$ whenever the window is non-empty and the participation constraint $v\geq v_{\min}$ holds; otherwise they post nothing. Because indemnities are computed on realized failures at true skill while prices are set at the claimed type, overclaiming is self-punishing---incentive compatibility is verified numerically rather than assumed.

\paragraph{Clients.} Ticket values are lognormal with the occupation's $(\bar v,\sigma)$. Client utility from provider $j$ is $\hat\rho_j v\pi+\omega\pi(1-\hat\rho_j)\Lam_j$ plus a Gumbel match term with scale $0.12$, where $\hat\rho_j$ is the posterior described below.

\paragraph{Beliefs.} Providers accumulate a public record of verified rescues and failures; verification succeeds independently with probability $\theta$, so the record is both censored and noisy. The posterior mean blends the empirical rescue frequency with the cap-implied prior $\rho(\hat s)$ at weight $n/(n+8)$, where $n$ is the record length. Beliefs are therefore not imposed by the modeller.

\paragraph{Worker mobility.} Each period the Free-Rider poaches $\lceil 0.06\cdot 60\rceil$ workers, exchanging its weakest for the Builder's strongest whenever the Builder's mean skill is higher. This is the sharpest form of the appropriability problem: the Builder's investment is directly transferable.

\paragraph{Policy-class sensitivity.} Varying the planning horizon over $\{3,6,12\}$, the poaching rate over $\{0.02,0.06,0.12\}$ and the record weight $n_0$ over $\{4,8,16\}$---27 configurations---leaves every open-window run sustained (Builder terminal skill 0.32--0.76 in the German software cell) and every empty-window run collapsed (54 of 54 across the US software and NHS radiology cells). The collapse/sustain dichotomy is a property of the window, not of the policy constants.

\paragraph{Reproducibility.} The simulation, calibration and figure code are provided in the Online Supplement together with an interactive browser-based simulator in which occupation and jurisdiction vectors can be varied independently. The simulator is a presentation of the model, not an additional result. Its model core is an independent re-implementation in JavaScript; it reproduces the Python reference results of Table~\ref{tab:heatmap} to three decimal places, with residual differences attributable to grid resolution. We report this as a verification step rather than a robustness result.

\section{Calibration sources}
\subsection*{C.1 Provenance of model components}

Component provenance is stated in the main text (Table~\ref{tab:provenance}, Section~\ref{sec:provenance}).

\subsection*{C.2 Occupation parameter sources}

\textbf{Software / web development.} $\pi=0.30$: \citet{pearce2022} report 39.3\% of top suggestions and 40.7\% of all suggestions vulnerable across 1{,}689 programs generated in 89 scenarios; \citet{fu2023} report 35.8\% of Copilot-generated snippets in live GitHub projects carrying CWE weaknesses; a subsequent replication reports 27.3\%. These are security-flaw rates, not general error rates, and we use them as an upper-bounded proxy for the rate at which human fallback is required. $\gamma=0.55$: \citet{brynjolfsson2025canaries}, ADP payroll data on 4.6 million workers, find a 16\% relative employment decline for ages 22--25 in exposed occupations (revised from 13\% in the August 2025 version) and roughly 20\% for software developers. $\theta_0=0.80$: git blame, test suites, reproducible builds, formal acceptance under \S~631 BGB. $\bar L_{\text{solv}}=3.0$: German IT professional-indemnity cover is written at EUR 250{,}000--10 million and is not sublimited to contract value.

\textbf{Legal research.} $\pi=0.25$: \citet{magesh2025}, preregistered, find 17\% (Lexis+ AI) and 33\% (Westlaw AI-Assisted Research) hallucination rates; \citet{dahl2024} find 58--88\% for general-purpose models. $\theta_0=0.50$: citations are verifiable but the causal link from advice to loss is contested.

\textbf{Radiology.} $\gamma=0.45$: \citet{budzyn2025} find adenoma detection in non-AI colonoscopy falling from 28.4\% (226/795) to 22.4\% (145/648), an absolute difference of $-6.0$pp (95\% CI $-10.5$ to $-1.6$), among endoscopists with $>2{,}000$ prior procedures. The causal interpretation is contested in peer-reviewed correspondence and the article carries a corrigendum restoring an omitted covariate (indication for colonoscopy); independent commentary notes a near-doubling of procedure volume over the study window as a confounder. We therefore treat the study as placing $\gamma$ ordinally high, not as identification. $\theta_0=0.70$: PACS archives and ground truth support attribution, but \citet{dratsch2023} show automation bias degrades the human check itself---correct ratings fall from 79.7\% to 19.8\% among inexperienced readers under incorrect AI suggestions.

\textbf{Audit.} Ordinal placement. $\theta_0=0.85$ reflects regulated working-paper documentation; $\bar L_{\text{law}}$ is finite in Germany under \S~323(2) HGB, which we note but do not impose in the baseline.

\textbf{Social work / child protection.} $\alpha=0.35$: the core service is relational; AI is confined to screening and documentation. $\pi=0.35$: \citet{moore2025} find counselling chatbots giving unsafe responses in roughly 20\% of crisis scenarios against 7\% for human therapists; documented disability and racial bias in child-protection screening tools supports a high effective failure rate in the decision-support role. $\theta_0=0.25$: formal documentation under \S~36 SGB VIII is extensive, but forensic causal attribution is weak and responsibility diffuses across agencies. $\bar L_{\text{solv}}=0.5$ against $\bar v=8$: harm is decoupled from the fee, which is set per \emph{Fachleistungsstunde} at approximately EUR 55 ($=0.011$ numeraire).

\subsection*{C.3 Jurisdiction parameter sources}

\textbf{Germany.} Loser pays, \S~91 ZPO; degressive GKG/RVG fee schedules give a fixed block of roughly EUR 2{,}000 and an exponent $a_c\approx0.66$. Contractual penalties are enforceable, \S\S~339~ff.\ BGB; between merchants an \emph{individually agreed} penalty escapes judicial reduction (\S~348 HGB, barring \S~343 BGB), while standard-terms penalties remain subject to \S~307 review with case-group ceilings (construction: 5\% of the contract sum, BGH VII ZR 210/01, BGHZ 153, 311; tightened VII ZR 42/22) and \S~138 BGB as the outer bound (in exceptional cases \S~242, BGH I ZR 168/05). We set $m=\infty$ as the stylized benchmark for the negotiated channel; consistently with the standard-terms scope condition of Remark~\ref{rem:scope}, supra-compensatory exposure in Germany is predicted to travel through negotiated clauses, not standard forms. Standard-terms control, \S\S~305 ff., 307 BGB: caps on liability for the breach of cardinal obligations are void where the ceiling fails to cover the contract-typical, foreseeable damage (BGH VIII ZR 155/99, BGHZ 145, 203, 216 = NJW 2001, 292; confirmed VIII ZR 337/11, NJW 2012, 3229, which requires no numerical sum---an abstract limitation to foreseeable contract-typical damages suffices, so the numeric floor $f v$ is a modelling device, not a legal quantum). Personal-injury and gross-fault caps in B2B standard terms are regularly invalid via \S~307(1), (2) no.~2 BGB, with \S~309 no.~7 carrying indicative effect (\emph{Indizwirkung}; BGH VIII ZR 141/06, NJW 2007, 3774, 3775). A void cap falls away entirely rather than being reduced---no \emph{geltungserhaltende Reduktion} (BGHZ 84, 109, 114~ff.\ = NJW 1982, 2309)---and the gap is filled by dispositive law, \S~306(2) BGB: this is the void-snap that truncates the enforceable choice set. Adequacy is measured against the contract-typical foreseeable damage and the drafter's control of the risk (X~ZR~133/03, NJW 2005, 422), not against the price of the service---supporting a floor that scales with the value at stake rather than with fees. Individually negotiated caps escape the control (\S~305(1) s.~3 BGB), but the case law's demanding reading of negotiation makes genuine escapes rare. We set $f=0.60$ as an illustrative benchmark consistent with ordinal legal severity; the legal sources identify the mechanism and the ranking, not a cardinal value. The floor binds in a cell iff $f>\max\{c_0,\underline{\Lam}_{\text{enf}}\}/((1-\psi)\bar v)$: $0.40$ in the German software cell---so the benchmark sits well inside the binding region, and the Austrian value $0.45$ is itself just above the software threshold---and $0.16$, $0.13$ and $0.13$ for legal research, radiology and audit. Official liability, Art.~34 GG with \S~839 BGB, transfers liability to the state with discharging effect and confines recourse to intent and gross negligence, hence $\psi_{\text{public}}=0.95$ rather than $1$.

\textbf{Austria.} \emph{Amtshaftungsgesetz}: organ liability toward the injured party is excluded, with internal recourse only. Standard-terms control is less severe than the German regime; we set $f=0.45$.

\textbf{Switzerland.} \emph{Verantwortlichkeitsgesetz} art.~3(3) excludes direct action against the official, hence $\psi_{\text{public}}=1$---the sharpest displacement in the sample. Contractual freedom is correspondingly broad; $f=0.20$.

\textbf{United Kingdom.} English rule, but costs are generally not recoverable on the Small Claims Track (approximately GBP 10{,}000)---save for fixed costs, court fees, witness expenses and a summary award against a party who has behaved unreasonably, CPR 27.14(2)---producing a step rather than a smooth threshold. Low caps in B2B contracts are frequently upheld under the UCTA 1977 reasonableness test between parties of comparable strength (\emph{Watford Electronics v Sanderson} [2001] EWCA Civ 317; \emph{Goodlife v Hall Fire} [2018] EWCA Civ 1371, insurability as a central factor), though the test is fact-sensitive and caps do fail (\emph{St Albans v ICL} [1996] 4 All ER 481); no lower bound on caps arises from law. Penalty doctrine: after \emph{Cavendish Square Holding BV v Talal El Makdessi} [2015] UKSC 67 at [32], the test is whether a secondary obligation imposes a detriment out of all proportion to any legitimate interest in enforcement of the primary obligation. The verified post-2015 span shows the tolerance for supra-compensatory exposure to be substantial: in \emph{Houssein v London Credit Ltd} the Court of Appeal corrected the test ([2024] EWCA Civ 721) and, on remission, a default rate at four times the standard rate was held non-penal ([2025] EWHC 2749 (Ch)), affirmed on a second appeal ([2026] EWCA Civ 830). We set the central multiple $m=3$ with a defensible range of $1.2$--$4$; the first version of this paper used the lower endpoint $1.2$, and no point value within the range is defensible as a measurement, so British results are reported at the centre with band sensitivity in Section~\ref{sec:robust}. Crown Proceedings Act 1947 leaves the Crown broadly liable as a private person, so $\psi_{\text{public}}=0.15$. The NHS channel is modelled separately with $\psi=0.90$ to represent state-pooled indemnity without provider-specific premium differentiation.

\textbf{United States (B2B).} American rule; each side bears its own costs, with a contingency route available at a 33\% share. UCC Article 2 governs goods only (\S~2-102; hybrid transactions via the predominant-purpose test, \emph{Bonebrake v.\ Cox}, 499 F.2d 951, 960 (8th Cir.\ 1974)); the expert services modelled here fall under state common law, where caps between sophisticated commercial parties are broadly enforced subject to carve-outs for gross negligence and willful misconduct (the New York line: \emph{Kalisch-Jarcho}, 58 N.Y.2d 377, 385; \emph{Sommer}, 79 N.Y.2d 540, 554; \emph{Abacus}, 18 N.Y.3d 675, 683). Restatement (Second) of Contracts \S~356 and UCC \S~2-718 bar supra-compensatory agreed damages, hence $m=1$. The US vector is a stylized common-law benchmark, not a fifty-state survey. Near-universal arbitration clauses in B2B suppress class aggregation, $\kappa=1$; the B2C configuration sets $\kappa=8$. $\psi_{\text{public}}=0.55$ parameterizes only the public-provider channel (federal- or state-employed professionals): the Federal Tort Claims Act (28 U.S.C.\ \S\S~1346(b), 2671--2680) waives federal sovereign immunity subject to the discretionary-function exception (\S~2680(a)), with heterogeneous state and municipal immunities alongside. The private B2B cells reported in the text use $\psi=0$ throughout. Freedom of contract in B2B implies no mandatory liability floor.

\paragraph{Observed cap distributions.} Industry contract benchmarks (Common Paper's cloud-service-agreement corpus; not peer reviewed) report roughly 85\% of agreements capping liability at one year of fees, about 13\% carrying enhanced caps typically up to five times contract value, and about 1\% uncapped. We use this distribution to discipline $m$ and note that a pronounced mass point at a conventional low cap with a thin uncapped tail is qualitatively consistent with a compressed message space; we do not treat it as a test.

\subsection*{C.4 What we do not claim}

We do not claim that the jurisdiction vectors are estimates. They are stylized types, and the United States in particular is a distribution over fifty regimes represented here as a point. We do not claim that the EU Product Liability Directive (EU) 2024/2853 alters the analysis for the providers we model: it covers software and AI systems as \emph{products} and does not reach negligently rendered professional services, does not compensate pure economic loss, and confines recovery to natural persons---so the B2B expert-service relationships studied here remain governed by national contract and tort law. The withdrawn AI Liability Directive is not treated as live law.

\end{document}